\newtheorem{thm}{Theorem}
\newtheorem*{thm*}{Theorem}
\newtheorem*{prop*}{Proposition}
\newtheorem{lemma}[thm]{Lemma}
\newtheorem*{lemma*}{Lemma}
\newtheorem{cor}[thm]{Corollary}
\newtheorem*{cor*}{Corollary}
\newtheorem*{cj*}{Conjecture}
\newtheorem*{Def*}{Definition}
\theoremstyle{definition}
\newtheorem{rem}{Remark}
\newcommand{\bq}{\begin{equation*}}
\newcommand{\be}{\begin{equation}}
\newcommand{\eq}{\end{equation*}}
\newcommand{\ee}{\end{equation}}
\newcommand{\bmu}{\begin{multline*}}
\newcommand{\emu}{\end{multline*}}
\newcommand{\ban}{\begin{align*}}
\newcommand{\bal}{\begin{align}}
\newcommand{\ean}{\end{align*}}
\newcommand{\eal}{\end{align}}
\newcommand{\id}{{\mathds{1}}}
\newcommand*{\coloneqq}{\mathrel{\vcenter{\baselineskip0.5ex \lineskiplimit0pt \hbox{\scriptsize.}\hbox{\scriptsize.}}} =}
\begin{document}

\title{Gaussian entanglement revisited}

\author{Ludovico Lami}
\affiliation{F\'{\i}sica Te\`{o}rica: Informaci\'{o} i Fen\`{o}mens Qu\`{a}ntics, Departament de F\'{i}sica, Universitat Aut\`{o}noma de Barcelona, ES-08193 Bellaterra (Barcelona), Spain}

\author{Alessio Serafini}
\affiliation{Department of Physics \& Astronomy, University College London,
Gower Street, London WC1E 6BT, United Kingdom}

\author{Gerardo Adesso}
\affiliation{Centre for the Mathematics and Theoretical Physics of Quantum Non-Equilibrium Systems, School of Mathematical Sciences, The University of Nottingham,
University Park, Nottingham NG7 2RD, United Kingdom}


\begin{abstract}
We present a novel approach to the separability problem for Gaussian quantum states of bosonic continuous variable systems. We derive a simplified necessary and sufficient separability criterion for arbitrary Gaussian states of $m$ vs $n$ modes, which relies on convex optimisation over marginal covariance matrices on one subsystem only. We further revisit the currently known results stating the equivalence between separability and positive partial transposition (PPT) for specific classes of Gaussian states. Using techniques based on matrix analysis, such as Schur complements and matrix means, we then provide a unified treatment and compact proofs of all these results. In particular, we recover the PPT-separability equivalence for: (i) Gaussian states of $1$ vs $n$ modes; and (ii) isotropic Gaussian states. In passing, we also retrieve (iii) the recently established equivalence between separability of a Gaussian state and and its complete Gaussian extendability.

Our techniques are then applied to progress beyond the state of the art. We prove that: (iv) Gaussian states that are invariant under partial transposition are necessarily separable; (v) the PPT criterion is necessary and sufficient for separability for Gaussian states of $m$ vs $n$ modes that are symmetric under the exchange of any two modes belonging to one of the parties; and (vi) Gaussian states which remain PPT under passive optical operations can not be entangled by them either. This is not a foregone conclusion per se (since Gaussian bound entangled states do exist) and settles a question that had been left unanswered in the existing literature on the subject.

This paper, enjoyable by both the quantum optics and the matrix analysis communities, overall delivers technical and conceptual advances which
are likely to be useful for further applications in continuous variable quantum information theory, beyond the separability problem.
\end{abstract}

\maketitle

\section{Introduction}

Gaussian states have played a privileged role in quantum optics and bosonic field theories,
essentially since the very early steps of such theories,
due to their ease of theoretical description and relevance to experimental practice.
Over the last twenty years, such a privilege has
carried over to quantum information science, where Gaussian states form the
core of the `continuous variable' toolbox~\cite{introeisert,biblioparis,adesso07,weedbrook12,adesso14,bucco}.
The analysis of quantum Gaussian states from the information theoretic standpoint brought up new subtle elements and much previously unknown insight into their structure. For, while Gaussian dynamics may essentially be dealt with
entirely at the phase space level (typically by normal mode decomposition, so that Gaussian dynamics are often trivialised as `quasi-free' in field theory), the analysis of quantum
information properties requires one to confront the Hilbert space description of the quantum states.
Hence, while Gaussian dynamics might well be exactly solvable with elementary tools,
the properties of Gaussian states related to the Hilbert space and tensor product structures
are far from being equally transparent.

The problem of {\em Gaussian separability}, that is, determining whether a bipartite Gaussian state is separable or entangled~\cite{adesso07}, exemplifies such a situation very well.
Necessary and sufficient conditions for Gaussian separability in the general $m$ vs $n$-mode case are available~\cite{Werner01,Giedke01}, yet they are recast in terms of convex optimisation problems whose solution (albeit numerically efficient) does not admit, in general, a closed analytical form. For non-separable states, a closely related question is whether their entanglement is distillable or bound~\cite{HorodeckiBound}. In the case of arbitrary multimode bipartite Gaussian states, while entanglement can never be distilled by Gaussian operations alone~\cite{nogo1,nogo2,nogo3}, it is known that entanglement distillability under general local operations and classical communications is equivalent to violation of the positivity of the partial transposition (PPT) criterion~\cite{Giedke01,GiedkeQIC}. In turn, the PPT criterion, which is as well efficiently computable at the level of covariance matrices for Gaussian states, and is in general only necessary for separability~\cite{Peres}, has been proven to be also sufficient in some important cases, notably when the bipartite Gaussian state under examination pertains to  a $1$ vs $n$-mode system~\cite{Simon00, Werner01}, when it is `bi-symmetric'~\cite{Serafini05}, i.e.~invariant under local permutations of any two modes within any of the two subsystems, and when it is `isotropic', i.e.~with a fully degenerate symplectic spectrum of its covariance matrix~\cite{holwer,botero03,giedkemode}. Outside of these special families, bound entangled Gaussian states can occur, as first shown in the $2$ vs $2$-mode case in~\cite{Werner01}.

In this paper we provide significant advances towards the characterisation of separability and entanglement distillability in Gaussian quantum states.
On one hand, we revisit the existing results, providing in particular a new compact proof for the equivalence between PPT and separability in $1$ vs $n$-mode Gaussian states, which encompasses the seminal $1$ vs $1$-mode case originally tackled by Simon~\cite{Simon00} and its extension settled by Werner and Wolf~\cite{Werner01}. Key to our proof is  the intensive use of Schur complements, which have enjoyed applications in various areas of (Gaussian) quantum information theory~\cite{Giedke01,giedkemode,nogo1,nogo2,nogo3,eisemi,gian,Simon16,Lami16}, and --- as further reinforced by this work --- may be appreciated as a mathematical cornerstone for continuous variable quantum technology.


On the other hand, we derive a number of novel results. In particular, a marginal extension of the techniques applied in the aforementioned proof allow us to prove  that Gaussian states invariant under partial transposition are necessarily separable, a result previously known only for the partial transposition of qubit subsystems
\cite{sep 2xN}.
We then show that the $1$ vs $n$-mode PPT-separability equivalence can be further extended to a class of arbitrary bipartite multimode Gaussian states that we call `mono-symmetric', i.e., invariant under local exchanges of any two modes on one of the two subsystems (see Fig.~\ref{mononucleosi}). This result, which (to the best of our knowledge) is observed and proven here for the first time, generalises the case of bi-symmetric states studied in~\cite{Serafini05}, providing as a byproduct a simplified proof for the latter as well.

As for isotropic Gaussian states, in the traditional approach the sufficiency of PPT for their separability follows from
a well known `mode-wise' decomposition of pure-state covariance matrices~\cite{holwer,botero03,giedkemode}, and from the fact
that the covariance matrix of an isotropic state is just a multiple of the covariance matrix of a pure Gaussian state.
Here, we derive the sufficiency of the PPT criterion for isotropic Gaussian states following a completely different and arguably more direct
approach. Main ingredients of this novel proof are advanced matrix analysis tools such as the operator geometric mean, already
found to be useful in the context of quantum optics ~\cite{Lami16}.

We also consider the well known class of Gaussian passive operations (i.e., the ones that preserve the average number of excitations of the
input state, such as beam splitters and phase shifters), which play a central role in quantum optics~\cite{introeisert,adesso14,bucco}, and we prove that a bipartite Gaussian state that always remains PPT under such a set of operations must also always stay separable.
This novel result complements the seminal study of~\cite{passive}, in that the latter only considered the possibility of turning a PPT state into a non-PPT one through passive operations --- essentially, the question of generating distillable entanglement --- which is not the same as the question of generating inseparability, because Gaussian PPT bound entangled states do exist~\cite{Werner01}. Here we settle the latter, more general and fundamental question.

All the previous results enable us to substantially extend the range of equivalence between Gaussian separability and PPT in contexts of strong practical relevance. Last but not the least --- in fact, first and foremost in the paper ---  we address the separability problem directly, and derive a simplified necessary and sufficient condition for Gaussian separability. For a bipartite state, this requires convex optimisation over marginal covariance matrices on one subsystem only, yielding a significant simplification over the existing criteria, which instead require optimisation on both parties~\cite{Werner01,Giedke01,eisemi}.

\begin{figure}[tb]
\centering \includegraphics[width=12cm]{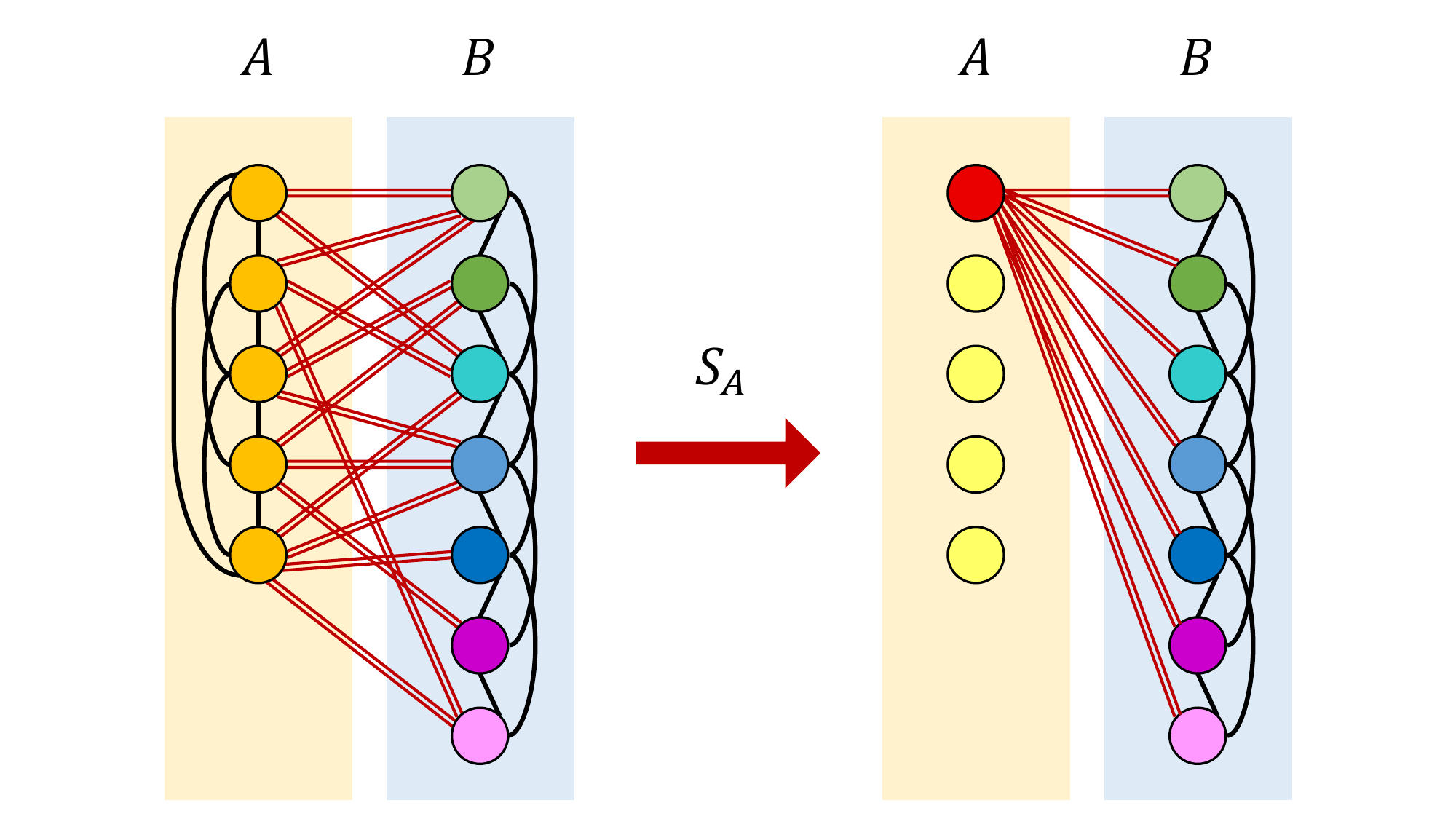}
\caption{Mono-symmetric Gaussian states of two parties $A$ (with $m$ modes) and $B$ (with $n$ modes) are invariant under exchange of any two modes within party $A$. By means of a suitable symplectic transformation on subsystem $A$, these states can be reduced to a $1$ vs $n$-mode Gaussian state and a collection of $m-1$ uncorrelated single-mode states on $A$'s side. Since PPT is equivalent to separability for $1$ vs $n$-mode Gaussian states, it follows that PPT is necessary and sufficient for separability of all $m$ vs $n$-mode mono-symmetric Gaussian states. In the schematics, entanglement between pairs of modes from the same party is depicted as a single solid (black) line, while entanglement across a mode from $A$ and a mode from $B$ is depicted as a double (dark red) line.
\label{mononucleosi}}
\end{figure}

This paper is organised as follows: in Sec.~\ref{methods}, the definition and basic properties of
Schur complements and matrix means that will be used in our derivations are recalled, and the Gaussian notation
is set; Sec.~\ref{secp} contains our main novel finding: a simplified necessary and sufficient condition for Gaussian separability in the general $m$ vs $n$-mode case; Sec.~\ref{ppt} contains a new proof of the sufficiency of the PPT criterion for separability of
$1$ vs $n$-mode Gaussian states achieved in a few, swift Schur complements'
manipulation;
Sec.~\ref{inva} shows that invariance under partial transposition implies separability;
in Sec.~\ref{symm} we prove that mono-symmetric states can be reduced, under local
unitary operations, to $1$ vs $n$-mode states, which implies that the PPT condition is sufficient for them too (see Fig.~\ref{mononucleosi});
in Sec.~\ref{mode-wise} we provide the reader with a new proof of the sufficiency of PPT for separability of isotropic Gaussian states, not relying on their mode-wise decomposition; Sec.~\ref{pass} contains our novel analysis of
entanglement generation under passive operations.
Sec.~\ref{outro} concludes the paper with a brief summary and some future perspectives related to this work.


\section{The toolbox: Schur complements, matrix means and Gaussian states} \label{methods}


One of the messages of the present paper is to lend further support to the fact that methods based on Schur complements and matrix means can be successfully exmployed to derive fundamental results in continuous variable quantum information, following a streak of applications to various contexts including separability, distillability, steerability, entanglement monogamy, characterisation of Gaussian maps, and related problems~\cite{Giedke01,giedkemode,nogo1,nogo2,nogo3,eisemi,gian,Simon16,Lami16}.
As a divertissement to set the stage, let us present a compact, essential compendium of such methods.
\vspace{3ex}
\subsection{Schur complements}\label{secSC}
Given a square matrix $M$ partitioned into blocks as
\begin{equation}
M\, =\, \begin{pmatrix} A & X \\ Y & B \end{pmatrix}\, , \label{block}
\end{equation}
the {\it Schur complement} of its (square, invertible) principal submatrix $A$, denoted by $M/A$, is defined as
\begin{equation}
M/A \coloneqq B - YA^{-1} X\, .
\label{schur}
\end{equation}
A useful reference on Schur complements is the monograph~\cite{ZHANG05}. Here we limit ourselves to stress some of the properties we will make use of in the present paper. As it turns out, Schur complements are the answer~\cite{42} to a number of questions that arise pretty naturally in matrix analysis. Many of these applications stem from the fact that the positivity conditions of $2\times 2$ hermitian block matrices can be easily written in terms of Schur complements.

\begin{lemma} \label{pos cond}
Consider a hermitian matrix
\begin{equation} H = \begin{pmatrix}  A & X \\ X^\dag & B \end{pmatrix} . \label{H part} \end{equation}
Then $H$ is strictly positive definite ($H>0$) if and only if $A > 0$ and $H/A=B-X^\dag A^{-1}X > 0$. Then, by taking suitable limits, $H$ is semidefinite positive ($H\geq 0$) if and only if $A\geq 0$ and $B-X^{\dag}(A+\varepsilon\mathds{1})^{-1} X\geq 0$ for all $\varepsilon>0$.
\end{lemma}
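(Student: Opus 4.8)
The plan is to reduce both statements to the block triangular factorization of $H$. Whenever the top-left block is invertible one has the identity
\begin{equation*}
\begin{pmatrix} A & X \\ X^\dag & B \end{pmatrix} \;=\; \begin{pmatrix} \mathds{1} & 0 \\ X^\dag A^{-1} & \mathds{1} \end{pmatrix} \begin{pmatrix} A & 0 \\ 0 & B - X^\dag A^{-1} X \end{pmatrix} \begin{pmatrix} \mathds{1} & A^{-1} X \\ 0 & \mathds{1} \end{pmatrix},
\end{equation*}
where, since $A=A^\dag$, the rightmost factor is precisely the adjoint of the leftmost one; I would abbreviate this as $H = L\,(A \oplus H/A)\,L^\dag$ with $L$ lower unitriangular, hence invertible. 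The first step is then the elementary observation that congruence by an invertible matrix preserves both strict and non-strict positivity: for any vector $v$ one has $v^\dag H v = w^\dag (A\oplus H/A)\,w$ with $w \coloneqq L^\dag v$, and $v \mapsto L^\dag v$ is a bijection, so $H$ and $A \oplus H/A$ share the same inertia.

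From this the strict case is immediate: if $H>0$ its diagonal block $A$ is positive definite, in particular invertible, so $H/A$ is well defined, and conversely if $A>0$ (hence invertible) and $H/A = B - X^\dag A^{-1}X>0$; in both directions the congruence delivers the chain $H>0 \iff A\oplus H/A>0 \iff (A>0 \text{ and } H/A>0)$.

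For the semidefinite statement I would argue by an $\varepsilon$-regularization of the pivot. The forward implication for $A$ is trivial: $A$ is a diagonal block of $H\geq 0$, hence $A\geq 0$. For the Schur-complement condition, fix $\varepsilon>0$ and set
\begin{equation*}
H_\varepsilon \;\coloneqq\; \begin{pmatrix} A+\varepsilon\mathds{1} & X \\ X^\dag & B \end{pmatrix} \;=\; H + \varepsilon \begin{pmatrix} \mathds{1} & 0 \\ 0 & 0 \end{pmatrix} \;\geq\; 0 .
\end{equation*}
Now $A+\varepsilon\mathds{1}>0$ is invertible, so the factorization applies to $H_\varepsilon$ with pivot $A+\varepsilon\mathds{1}$, and the congruence argument gives $H_\varepsilon\geq 0 \iff (A+\varepsilon\mathds{1}) \oplus \bigl(B - X^\dag(A+\varepsilon\mathds{1})^{-1}X\bigr)\geq 0$; since the first summand is $>0$, this is simply $B - X^\dag(A+\varepsilon\mathds{1})^{-1}X\geq 0$, which is what we wanted. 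Conversely, assuming $A\geq 0$ and $B - X^\dag(A+\varepsilon\mathds{1})^{-1}X\geq 0$ for every $\varepsilon>0$, both summands of $(A+\varepsilon\mathds{1}) \oplus \bigl(B - X^\dag(A+\varepsilon\mathds{1})^{-1}X\bigr)$ are positive semidefinite, so the factorization yields $H_\varepsilon\geq 0$ for each $\varepsilon$; letting $\varepsilon\to 0^+$ and using that the positive semidefinite cone is closed gives $H = \lim_{\varepsilon\to 0^+} H_\varepsilon \geq 0$.

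I do not expect a genuine obstacle here; the only point that demands care — and the reason the non-strict statement carries the whole family of conditions indexed by $\varepsilon>0$ rather than a single condition built from a pseudoinverse — is that $X^\dag(A+\varepsilon\mathds{1})^{-1}X$ need not converge to $X^\dag A^{+}X$ as $\varepsilon\to 0^+$ when the range of $X$ is not contained in that of $A$, so the regularization cannot be removed. Accordingly, I would be scrupulous to invoke the triangular factorization only when the pivot block is honestly invertible, and to pass to the conclusion $H\geq 0$ exclusively through the closedness of the PSD cone rather than through any naive limit of the Schur complements themselves.
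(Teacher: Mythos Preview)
Your proof is correct and complete; the block triangular (Aitken/LDL$^\dag$) factorization together with Sylvester's law of inertia is exactly the standard route, and your $\varepsilon$-regularization for the semidefinite case is handled carefully. Note, however, that the paper does not actually supply a proof of this lemma: it is stated as a known fact, with the reader referred to the monograph \cite{ZHANG05}, so there is no in-paper argument to compare against---your write-up simply fills in what the paper leaves as background.
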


A consequence of this result that will be relevant to us is the following.

\begin{cor}
Let $H$ be a hermitian matrix partitioned as in~\eqref{H part}. Then, if $A>0$,
\begin{equation}  H/A\,=\, \sup \Big\{ \tilde{B}=\tilde{B}^\dag:\ H > 0\oplus \tilde{B} \Big\}\, . \label{variational} \end{equation}
Here we mean that the matrix set on the right-hand side has a supremum (i.e.~a minimum upper bound) with respect to the L\"owner partial order ($X\geq Y$ if and only if $X-Y$ is positive semidefinite), and that this supremum is given by the Schur complement on the left-hand side.
\end{cor}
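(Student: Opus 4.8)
The plan is to check the two halves of the supremum claim separately: first that $H/A$ is an upper bound of the set $S\coloneqq\big\{\tilde B=\tilde B^\dag:\ H>0\oplus\tilde B\big\}$, and then that it is the \emph{least} such bound. Both reduce to a single observation: for any hermitian $\tilde B$, the matrix $H-0\oplus\tilde B=\left(\begin{smallmatrix} A & X \\ X^\dag & B-\tilde B\end{smallmatrix}\right)$ has $A$ as a principal submatrix with Schur complement $(H/A)-\tilde B$, so Lemma~\ref{pos cond} translates positivity statements about $H-0\oplus\tilde B$ directly into L\"owner inequalities between $\tilde B$ and $H/A$.

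First I would prove the upper bound. Take any $\tilde B\in S$, so that $\left(\begin{smallmatrix} A & X \\ X^\dag & B-\tilde B\end{smallmatrix}\right)>0$. Since $A>0$ by hypothesis, Lemma~\ref{pos cond} forces its Schur complement to be strictly positive, i.e.\ $(H/A)-\tilde B>0$, and in particular $\tilde B\leq H/A$. Next, to see that no smaller bound works, I would exhibit a family in $S$ approaching $H/A$ from below: for $\varepsilon>0$ set $\tilde B_\varepsilon\coloneqq H/A-\varepsilon\id$. Then the Schur complement of $A$ in $\left(\begin{smallmatrix} A & X \\ X^\dag & B-\tilde B_\varepsilon\end{smallmatrix}\right)$ equals $(H/A)-\tilde B_\varepsilon=\varepsilon\id>0$, so by the strict direction of Lemma~\ref{pos cond} this block matrix is strictly positive definite, i.e.\ $\tilde B_\varepsilon\in S$. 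Consequently, if $C=C^\dag$ is any upper bound of $S$, then $C\geq\tilde B_\varepsilon=H/A-\varepsilon\id$ for every $\varepsilon>0$, and letting $\varepsilon\to0^+$ gives $C\geq H/A$. Together with the previous paragraph this shows $H/A=\sup S$.

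The one point that needs care — and the reason the statement is phrased with a supremum rather than a maximum — is that the bound is \emph{not} attained. Indeed, the congruence
\[
H-0\oplus(H/A)=\begin{pmatrix}\id\\ X^\dag A^{-1}\end{pmatrix} A \begin{pmatrix}\id & A^{-1}X\end{pmatrix}
\]
is only positive semidefinite, never strictly positive (the rectangular factor on the right fails to have full column rank whenever the lower block is present, equivalently the lower-right Schur complement vanishes), so $H/A\notin S$. Thus the only genuine subtlety is to keep the strict versus non-strict versions of Lemma~\ref{pos cond} straight at each invocation; modulo that bookkeeping the corollary is a direct consequence of the lemma.
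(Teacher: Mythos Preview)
Your argument is correct and complete. The paper does not actually supply a proof of this corollary --- it merely presents it as ``a consequence'' of Lemma~\ref{pos cond} --- so your two-step verification (upper bound via the strict part of Lemma~\ref{pos cond}, then least upper bound via the family $\tilde B_\varepsilon=H/A-\varepsilon\id$) is precisely the natural way to unpack that omitted reasoning, and your final remark that the supremum is not attained is a welcome clarification.
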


We note in passing that from the above variational representation it follows immediately that $H/A$ is monotone and concave in $H>0$.

\subsection{Matrix means}\label{secMM}
{
Somehow related to Schur complements are the so-called matrix means. As one might expect from their name, these are functions taking two positive matrices as inputs and yielding another positive matrix as output. For an excellent introduction to this topic, we refer the reader to~\cite[Chapter 4]{BHATIA}. Here, we review only some basic facts that we will find useful throughout the paper. Given two strictly positive matrices $A,B>0$, the simplest mean one can define is the {\it arithmetic mean} $(A+B)/{2}$, whose generalisation from scalars to matrices does not present difficulties. Another easily defined object is the {\it harmonic mean}~\cite{parallel sum, ando79}, denoted by $A!B$ and given by
\begin{equation}
A!B\, \coloneqq\, \left(\frac{A^{-1}+B^{-1}}{2}\right)^{-1}\, . \label{harmonic}
\end{equation}
Incidentally, the harmonic mean can be also defined as a Schur complement, with the help of the identity $A!B=A-A(A+B)^{-1}A=\begin{pmatrix} A & A \\ A & A+B \end{pmatrix} \Big/ (A+B)$, which immediately implies that $A!B$ is monotone and jointly concave in $A$ and $B$, i.e.~concave in the pair $(A,B)$.

The least trivially defined among the elementary means is undoubtedly the {\it geometric mean} $A\# B$ between strictly positive matrices $A,B>0$~\cite{geometric mean, ando79}, which can be constructed as
\begin{equation}
A\# B\, \coloneqq\, \max\{X=X^{\dag}:\ A\geq XB^{-1} X\}\, , \label{geometric}
\end{equation}
where the above maximisation is with respect to the L\"owner partial order, and the fact that the particular set of matrices we chose admits an absolute maximum is already nontrivial. With a bit of work one can show that $A\# B$ is explicitly given by
\begin{equation}
A\# B\, =\, A^{1/2} \left( A^{-1/2} B A^{-1/2} \right)^{1/2} A^{1/2}\, . \label{geom expl}
\end{equation}
Having multiple expressions for a single matrix mean is always useful, as some properties that are not easy to prove within one formulation may become apparent when a different approach is taken. For instance, the fact that $A\# B$ is covariant under congruences, i.e.~$(MAM^{\dag})\#(MBM^{\dag})=M(A\#B)M^{\dag}$ for all invertible $M$, is far from transparent if one looks at~\eqref{geom expl}, while it becomes almost obvious when~\eqref{geometric} is used. On the contrary, the fact that $A\#B=(AB)^{1/2}$ when $[A,B]=0$ is not easily seen from~\eqref{geometric}, but it is readily verified employing~\eqref{geom expl}.

As it happens with scalars, the inequality
\begin{equation}
A!B\, \leq\, A\#B\, \leq\, \frac{A+B}{2} \label{hga}
\end{equation}
holds true for all $A,B>0$. In view of the above inequality, it could be natural to wonder, how the geometric mean between the leftmost and rightmost sides of~\eqref{hga} compares to $A\#B$. That this could be a fruitful thought is readily seen by asking the same question for real numbers. In fact, when $0<a,b\in\mathds{R}$ it is elementary to verify that $\sqrt{ab}=\sqrt{\frac{a+b}{2}\, \left( \frac{1/a+1/b}{2}\right)^{-1}}$. Our first result is a little lemma extending this to the non-commutative case. We were not able to find a proof in the literature, so we provide one.

\begin{lemma} \label{lemma ha=g}
For $A,B>0$ strictly positive matrices, the identity
\begin{equation}
A\#B\, =\, \left(\frac{A+B}{2}\right)\#\left( A!B \right)
\label{ha=g}
\end{equation}
holds true.
\end{lemma}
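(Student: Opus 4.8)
The plan is to exploit the congruence-covariance of the geometric mean, recorded in the excerpt as $(MAM^\dag)\#(MBM^\dag)=M(A\#B)M^\dag$ for invertible $M$, together with the closed form \eqref{geom expl}. First I would reduce to a convenient normal form: since both sides of \eqref{ha=g} transform covariantly under the congruence $C\mapsto A^{-1/2}CA^{-1/2}$ (note that $A!B$ and $(A+B)/2$ are built from $A$ and $B$ by operations that also respect this congruence — indeed $A^{-1/2}(A!B)A^{-1/2} = \id\,!\,(A^{-1/2}BA^{-1/2})$ and similarly for the arithmetic mean), it suffices to prove the identity when $A=\id$. Writing $T\coloneqq A^{-1/2}BA^{-1/2}>0$, the claim becomes
\begin{equation}
T^{1/2}\, =\, \left(\frac{\id+T}{2}\right)\#\left(\frac{2\,T}{\id+T}\right)\, ,
\end{equation}
where I used $\id\,!\,T=2(\id+T^{-1})^{-1}=2T(\id+T)^{-1}$, and all four matrices appearing are functions of the single positive operator $T$, hence mutually commute.

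Once everything commutes, I would invoke the fact quoted in the excerpt that $A\#B=(AB)^{1/2}$ when $[A,B]=0$. Applying this to the right-hand side gives
\begin{equation}
\left(\frac{\id+T}{2}\right)\#\left(\frac{2T}{\id+T}\right)\, =\, \left(\frac{\id+T}{2}\cdot\frac{2T}{\id+T}\right)^{1/2}\, =\, T^{1/2}\, ,
\end{equation}
which is exactly what we wanted. Translating back via the congruence $A^{1/2}(\cdot)A^{1/2}$ yields \eqref{ha=g} in general, using once more the covariance of $\#$ under congruence on both sides simultaneously.

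The only genuinely delicate point is bookkeeping the congruence-covariance for the \emph{composite} expressions $A!B$ and $(A+B)/2$, i.e.~checking that conjugating the whole identity by $A^{-1/2}$ sends each of the five matrices ($A$, $B$, $A\#B$, $A!B$, $(A+B)/2$) to its counterpart with $A$ replaced by $\id$ and $B$ by $T$; this is where a careless step could go wrong, but it is a routine verification since the arithmetic mean is manifestly congruence-covariant and the harmonic mean inherits it either from its Schur-complement representation or directly from $M(A!B)M^\dag = (MAM^\dag)!(MBM^\dag)$ for invertible $M$. An alternative that sidesteps even this is to diagonalise: simultaneously, $A$ and $B$ need not commute, but after the reduction to $A=\id$ the single matrix $T$ can be diagonalised by a unitary, reducing \eqref{ha=g} to the scalar identity $\sqrt{t}=\sqrt{\tfrac{1+t}{2}\cdot\tfrac{2t}{1+t}}$ for each eigenvalue $t>0$, which is trivial. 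I would present the congruence argument as the main line since it is the most conceptual, and mention the eigenvalue reduction as the quickest sanity check. I expect no real obstacle here; this is a short lemma whose proof is essentially one clean reduction followed by the commutative case.
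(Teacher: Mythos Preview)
Your proof is correct. The overall strategy coincides with the paper's: exploit congruence covariance of the geometric mean (together with the obvious covariance of the arithmetic and harmonic means) to reduce to a situation where the two arguments commute, then invoke $A\#B=(AB)^{1/2}$ and the scalar identity. The difference lies only in the choice of normalising congruence. You conjugate by $A^{-1/2}$, sending $(A,B)\mapsto(\id,T)$ so that everything becomes a functional calculus in the single operator $T$; the paper instead conjugates by $(A+B)^{-1/2}$, sending $(A,B)\mapsto(\tilde A,\tilde B)$ with $\tilde A+\tilde B=\id$, which forces $[\tilde A,\tilde B]=0$ and then computes $\tilde A\tilde B=\tfrac12(A+B)^{-1/2}(A!B)(A+B)^{-1/2}$ directly. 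Your normalisation is the more standard ``reduce one argument to the identity'' trick and makes the verification slightly shorter, since the commutativity is immediate; the paper's symmetric normalisation is a touch more elegant in that it treats $A$ and $B$ on equal footing and makes the appearance of $(A+B)/2$ and $A!B$ on the right-hand side emerge naturally from the computation rather than being inserted by hand. Either way the argument is a one-line reduction followed by the commutative case, and your bookkeeping of the covariance of $A!B$ and $(A+B)/2$ under congruence is correct.
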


\begin{proof}
We start by defining $\tilde{A}\coloneqq \left(A+B\right)^{-1/2}A\left(A+B\right)^{-1/2}$, $\tilde{B}\coloneqq\left(A+B\right)^{-1/2}B\left(A+B\right)^{-1/2}$. It is easy to see that $[\tilde{A},\tilde{B}]=0$, for instance because $\tilde{A}+ \tilde{B}=\left(A+B\right)^{-1/2}(A+B)\left(A+B\right)^{-1/2}=\mathds{1}$. Therefore, the identity $\tilde{A}\#\tilde{B}=(\tilde{A}\tilde{B})^{1/2}$ holds. Now, on the one hand the congruence covariance of the geometric mean implies that
\begin{equation*}
\tilde{A}\#\tilde{B}\, =\, \left( \left(A+B\right)^{-1/2}A\left(A+B\right)^{-1/2} \right) \# \left( \left(A+B\right)^{-1/2}B\left(A+B\right)^{-1/2} \right)\, =\, \left(A+B\right)^{-1/2}(A\# B)\left(A+B\right)^{-1/2}\, .
\end{equation*}
On the other hand,
\begin{align*}
\tilde{A}\tilde{B}\, &=\, \left(A+B\right)^{-1/2}A\left(A+B\right)^{-1}B\left(A+B\right)^{-1/2}\, =\\
&=\, \left(A+B\right)^{-1/2}\left(B^{-1}(A+B)A^{-1}\right)^{-1}\left(A+B\right)^{-1/2} \, =\, \frac12\, \left(A+B\right)^{-1/2}(A!B)\left(A+B\right)^{-1/2}\, .
\end{align*}
Putting all together, we see that
\begin{equation*}
\left(A+B\right)^{-1/2}(A\# B)\left(A+B\right)^{-1/2}\, =\, \tilde{A}\#\tilde{B}\, =\, (\tilde{A}\tilde{B})^{1/2}\, =\, \frac{1}{\sqrt{2}}\, \left(\left(A+B\right)^{-1/2}(A!B)\left(A+B\right)^{-1/2}\right)^{1/2}\, .
\end{equation*}
Conjugating by $(A+B)^{1/2}$, we obtain
\begin{equation}
A\# B\, =\, \frac{1}{\sqrt{2}}\, \left(A+B\right)^{1/2} \left(\left(A+B\right)^{-1/2}(A!B)\left(A+B\right)^{-1/2}\right)^{1/2} \left(A+B\right)^{1/2}\, =\, \left(\frac{A+B}{2}\right)\#\left( A!B \right) ,
\end{equation}
where the last step is an application of~\eqref{geom expl}.
\end{proof}
}

\subsection{Gaussian states}\label{secGS}
In the remainder of this Section, we provide a brief introduction to the main concepts of the Gaussian formalism.
Quantum continuous variables just describe quantum mechanics applied to an infinite-dimensional Hilbert space equipped with position and momentum operators $x_j,p_k$ ($j,k=1,\ldots, n$) satisfying the so-called canonical commutation relations $[x_j,p_k]=i\delta_{jk}$ (in natural units, $\hbar=1$). Such a Hilbert space describes, for instance, a collection of $n$ quantum harmonic oscillators, or modes of the electromagnetic radiation field. The operators $x_j,p_k$ are often grouped together to form a single vector of $2n$ operators $r \coloneqq  (x_1,p_1,\ldots,x_n,p_n)$. The canonical commutation relations then take the form
\begin{equation}
[r,r^T]\, =\, i\ \Omega\, \coloneqq\, i \  \omega^{\oplus n}\,, \quad \omega \coloneqq   \begin{pmatrix} 0 & {1} \\ - {1} & 0 \end{pmatrix}.
\label{CCR}
\end{equation}
An important object one can form is the displacement operator. For any $z\in\mathds{R}^{2n}$, we define
\begin{equation}
D_z\, \coloneqq\, e^{iz^T\Omega r}\, .
\label{displacement}
\end{equation}

It turns out that Nature has a special preference for quadratic Hamiltonians. A prominent example is the free-field Hamiltonian $\mathcal{H}_0=\frac{1}{2} r^T r$. Not surprisingly, thermal states of quadratic Hamiltonians are extremely easily produced in the lab, in fact so easily that they deserve a special name: Gaussian states~\cite{introeisert,biblioparis,adesso07,weedbrook12,adesso14,bucco}. As the name suggests, they can be fully described by a real displacement vector $w\in\mathds{R}^{2n}$ and a real, $2n\times 2n$ {\it quantum covariance matrix} (QCM) $V$, defined respectively as $w=\langle r \rangle$ and $V = \langle \{r-w,r^T-w^T\}\rangle$. By quantum covariance matrix we mean a real, symmetric, strictly positive matrix $V>0$ that moreover satisfies the Heisenberg uncertainty relation~\cite{simon94}
\begin{equation}
V+i\Omega \geq 0\, . \label{Heisenberg}
\end{equation}
Note that (\ref{Heisenberg}) can equivalently be written as $V-i\Omega \geq 0$ upon applying transposition (as $V^T=V$, $\Omega^T=-\Omega$).

The Gaussian state $\rho^G(V,w)$ with QCM $V$ and displacement vector $w$ admits the representation
\begin{equation}
\rho^G(V,w)\, =\, \int \frac{d^{2n} u}{(2\pi)^n} \ e^{-\frac{1}{4} u^T V u - iw^T r} D_{\Omega r}\,,
\end{equation}
which justifies the alternative definition of Gaussian states as the continuous variable states associated with a Gaussian characteristic function.

Clearly, linear transformations $r \rightarrow S r$ that preserve the commutation relations~\eqref{CCR} play a special role within this framework. Any such transformation is described by a {\it symplectic} matrix, i.e.~a matrix $S$ with the property that $S\Omega S^T=\Omega$. Symplectic matrices form a non-compact, connected Lie group that is additionally closed under transposition, and is typically denoted by $\mathrm{Sp}(2n,\mathds{R})$~\cite{pramana}. The importance of these operations arises from the fact that for any symplectic $S$ there is a unitary evolution $U_S$ on the Hilbert space such that $U_S^\dag r U_S = Sr$. Most importantly, such a unitary is the product of a finite number of factors $e^{i\mathcal{H}_Q}$, where $\mathcal{H}_Q$ is a quadratic Hamiltonian, and as such it can be easily implemented in laboratory. Under conjugation by $U_{S}$, Gaussian states transform as
\begin{equation}
U_S^\dag\, \rho^G(V,w)\, U_S\, =\, \rho^G\left(SVS^T,\, Sw \right)\, .
\label{Gauss transform U_S}
\end{equation}

It turns out that all Gaussian states can be brought into a remarkably simple normal form via unitary transformations induced by quadratic Hamiltonians. In fact, a theorem by Williamson~\cite{willy,willysim} implies that for all strictly positive matrices $V>0$ there is a symplectic transformation $S$ and a diagonal matrix $N>0$ such that 
\begin{equation}
S^{-1} VS^{-T}\, =\, N \coloneqq \text{diag} (\nu_1,\nu_1,\ldots,\nu_n,\nu_n) \, .
\label{Williamson}
\end{equation}
The diagonal elements $\nu_i>0$, each taken with multiplicity one, are called symplectic eigenvalues of $V$, and are uniquely determined by $V$ (up to their order, which can be assumed decreasing by convention with no loss of generality). Accordingly, we will refer to $\vec{\nu}=(\nu_1,\nu_2,\ldots,\nu_n)$ as the {\em symplectic spectrum} of $V$. Notably, Heisenberg uncertainty relation~\eqref{Heisenberg} can be conveniently restated as $N\geq \mathds{1}$, or equivalently $\nu_{i}\geq 1$ for all $i=1,\ldots, n$. { A Gaussian state $\rho^G(V,w)$ can be shown to be pure if and only if all of its symplectic eigenvalues are equal to $1$, which corresponds to the matrix equality $V\Omega V\Omega=-\id$. Correspondingly, a QCM $V$ satisfying $\nu_{j}=1$ for all $j=1,\ldots,n$ (or equivalently $\det V = 1$) will be called a {\it pure} QCM. Note that pure QCMs $V$ are themselves symplectic matrices, $V=(S^T S)^{-1} \in \mathrm{Sp}(2n,\mathds{R})$, and are the extremal elements in the convex set of QCMs.}

Finally, note that displacement vector $w$ is often irrelevant since it can be made to vanish by local unitaries, resulting from the action of the displacement operator of (\ref{displacement}) on each individual mode. Since all the physically relevant informational properties such as purity and entanglement are invariant under local unitaries, all the results we are going to present will not depend on the first moments. Therefore, in what follows, we will completely specify any Gaussian state under our investigation as $\rho^G(V)$ in terms of its QCM $V$ alone.


\section{Simplified separability criterion for  M vs N-mode Gaussian states}\label{secp}

The QCM $V_{AB}$ of a Gaussian state $\rho^G_{AB}$ pertaining to a $(m+n)$-mode bipartite system $AB$ can be naturally written in block form according to the splitting between  the subsystems $A$ and $B$:
\begin{equation}
V_{AB}\, =\, \begin{pmatrix} V_A & X \\ X^T & V_B \end{pmatrix}\, .  \label{V explicit}
\end{equation}
According to the same splitting, the matrix $\Omega$ appearing in (\ref{CCR}) takes the form
\begin{equation}
\Omega_{AB}\, =\, \begin{pmatrix} \Omega_A & 0 \\ 0 & \Omega_B \end{pmatrix}\, =\, \Omega_A\oplus\Omega_B\, ,
\end{equation}
with $\Omega_A =  \omega^{\oplus m}$ and $\Omega_B=\omega^{\oplus n}$.

The entanglement properties of a bipartite Gaussian state can thus  be conveniently translated at the level of QCMs. Recall that, in general, a bipartite quantum state $\rho_{AB}$ is separable if and only if it can be written as a convex mixture of product states, $\rho_{AB} = \sum_k p_k ({\sigma_k}_A \otimes {\tau_k}_B)$, with $p_k$ being probabilities~\cite{Werner89}.
For a Gaussian state $\rho^G_{AB}$ of a bipartite continuous variable system, we have then the following.

\begin{lemma}[Proposition 1 in~\cite{Werner01}] \label{sep} $ \\ $
A Gaussian state $\rho^G_{AB}(V_{AB})$ with $(m+n)$-mode QCM $V_{AB}$ is separable if and only if there exist an $m$-mode QCM $\gamma_A \geq i \Omega_A$ and an $n$ mode QCM $\gamma_B \geq i \Omega_B$  such that
\begin{equation} V_{AB} \geq \gamma_A\oplus\gamma_B\, . \label{sep eq} \end{equation}
\end{lemma}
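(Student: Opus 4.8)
The plan is to prove the two implications separately, each time converting the operational notion of separability into a statement about the QCM. For the ``if'' direction, suppose $V_{AB}\geq\gamma_A\oplus\gamma_B$ with $\gamma_A\geq i\Omega_A$ and $\gamma_B\geq i\Omega_B$ (one checks that these conditions already force $\gamma_A,\gamma_B>0$, so they are genuine QCMs). I would exhibit a separable Gaussian state with QCM $V_{AB}$ by taking the product Gaussian state $\rho^G_A(\gamma_A)\otimes\rho^G_B(\gamma_B)$ and injecting correlated \emph{classical} Gaussian noise via local displacements: with $Y\coloneqq V_{AB}-\gamma_A\oplus\gamma_B\geq 0$, set
\[
\rho_{AB}\,\coloneqq\,\int d^{2(m+n)}z\;p_Y(z)\,D_z\big(\rho^G_A(\gamma_A)\otimes\rho^G_B(\gamma_B)\big)D_z^\dagger\,,
\]
where $p_Y$ is the centred Gaussian density with phase-space covariance $\tfrac12 Y$ (supported on $\mathrm{ran}\,Y$ when $Y$ is singular). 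Since $D_z=D_{z_A}\otimes D_{z_B}$ factorises across the $A|B$ cut, each integrand is a product state, so $\rho_{AB}$ is manifestly separable; and since a Gaussian-weighted mixture of Gaussian states is again Gaussian with QCM equal to that of the constituents plus twice the displacement covariance, the QCM of $\rho_{AB}$ is $\gamma_A\oplus\gamma_B+Y=V_{AB}$. The only points needing care are the factor-$2$ convention between a QCM and an ordinary covariance and the degenerate case $\det Y=0$, both routine.

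For the ``only if'' direction, let $\rho^G_{AB}(V_{AB})=\sum_k p_k\,\sigma_k^A\otimes\tau_k^B$ be separable (the sum understood as a trace-norm limit, or an integral, if necessary). Because the diagonal entries of a QCM are nonnegative expectations of $r_j^2$, finiteness of $V_{AB}$ forces each component $\sigma_k,\tau_k$ with $p_k>0$ to have a finite QCM $V_A^{(k)}$, $V_B^{(k)}$, which by the Robertson--Schr\"odinger uncertainty relation --- valid for arbitrary, not only Gaussian, states --- obeys $V_A^{(k)}\geq i\Omega_A$ and $V_B^{(k)}\geq i\Omega_B$. The QCM of a statistical mixture satisfies
\[
V_{AB}\,=\,\sum_k p_k\big(V_A^{(k)}\oplus V_B^{(k)}\big)+2\sum_k p_k\,(\mu_k-\bar\mu)(\mu_k-\bar\mu)^T\,,
\]
where $\mu_k$ is the first-moment vector of $\sigma_k^A\otimes\tau_k^B$ and $\bar\mu=\sum_k p_k\mu_k$; the first sum is block-diagonal because each product term has no $A$--$B$ correlations, and the second sum is positive semidefinite. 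Setting $\gamma_A\coloneqq\sum_k p_k V_A^{(k)}$ and $\gamma_B\coloneqq\sum_k p_k V_B^{(k)}$, convexity of $\{M=M^T:M\geq i\Omega_A\}$ and of $\{M=M^T:M\geq i\Omega_B\}$ gives $\gamma_A\geq i\Omega_A$, $\gamma_B\geq i\Omega_B$, and discarding the positive-semidefinite mean-spread term yields $V_{AB}\geq\gamma_A\oplus\gamma_B$.

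I expect the main obstacle to be the infinite-dimensional bookkeeping in the ``only if'' part: a separable state need not decompose into finitely many product states, and a priori its components could carry divergent second moments, so one must rigorously justify both the moment-finiteness claim and the interchange of the (possibly continuous) mixture with the quadratic expectations defining $V_{AB}$ --- most safely through a truncation/approximation argument exploiting the positivity of $r_j^2$ together with continuity of the relevant functionals. The ``if'' direction, by contrast, is essentially a routine computation once the local-noise construction is in place.
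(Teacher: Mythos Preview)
The paper does not actually prove this lemma: it is quoted verbatim as Proposition~1 of Werner and Wolf~\cite{Werner01} and used as an input to the subsequent results, so there is no ``paper's own proof'' to compare against. Your argument is essentially the standard one from that reference: the ``if'' direction via the local classical-noise construction $\rho_{AB}=\int p_Y(z)\,D_z\big(\rho^G_A(\gamma_A)\otimes\rho^G_B(\gamma_B)\big)D_z^\dagger\,dz$, and the ``only if'' direction by averaging the per-term QCMs and discarding the positive mean-spread contribution. The mechanics are correct, including the observation that each product term has a block-diagonal QCM and that the convex combination $\gamma_A=\sum_k p_k V_A^{(k)}$ inherits $\gamma_A\geq i\Omega_A$.

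Your self-diagnosed obstacle is the genuine one. In infinite dimensions separability is defined as a trace-norm closure (or a Choquet-type integral) of finite product mixtures, so the discrete decomposition you wrote need not exist, and for a continuous mixing measure the ``each component with $p_k>0$ has finite second moments'' argument does not literally apply. The clean way around this, as in Werner--Wolf, is to work with an arbitrary separable decomposition $\int d\mu(s)\,\sigma_s^A\otimes\tau_s^B$, use positivity of $r_j^2$ and Tonelli to conclude $\int d\mu(s)\,\mathrm{Tr}(\sigma_s^A r_j^2)<\infty$ (hence the per-term QCMs are $\mu$-a.e.\ finite and their averages well defined), and then pass the quadratic expectations through the integral. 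Alternatively one may approximate by finite mixtures and take limits, but then one must check that the bound $V_{AB}\geq\gamma_A^{(N)}\oplus\gamma_B^{(N)}$ survives the limit, which requires some compactness or uniform control on the $\gamma^{(N)}$. Either route is routine once stated, but should be made explicit for a complete proof.
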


In view of the above result, a QCM $V_{AB}$ satisfying~\eqref{sep eq} for some marginal QCMs $\gamma_{A}$, $\gamma_{B}$ will itself be called {\it separable} from now on. The criterion in (\ref{sep eq}) is necessary and sufficient for separability of QCMs, and can be evaluated numerically via convex optimisation~\cite{Giedke01,eisemi}, however such optimisation runs over both marginal QCMs, hence scaling (polynomially) with both $m$ and $n$.

The first main result of this paper is to show that the necessary and sufficient separability condition~\eqref{sep eq}, for any $m$ and  $n$, can be further simplified. This result is quite neat and of importance in its own right. In particular, it allows us to recast the Gaussian separability problem as a convex optimisation over the marginal QCM of {\it one} subsystem only (say $A$ without loss of generality), resulting in an appreciable reduction of computational resources, especially in case party $A$ comprises a much smaller number of modes than party $B$.

\begin{thm}[Simplified separability condition for an arbitrary QCM] \label{simp sep lemma} $ \\ $
A  QCM $V_{AB}$ of $m+n$ modes is separable if and only if there exists an $m$-mode QCM $\gamma_A \geq i \Omega_A$ such that 
\begin{equation}
V_{AB} \geq \gamma_A\oplus i\Omega_B\, .
\label{simp sep 1}
\end{equation}
In terms of the block form (\ref{V explicit}) of $V_{AB}$, when $V_{B}>i\Omega_{B}$ the above condition is equivalent to the existence of a real matrix $\gamma_A$ satisfying
\begin{equation}
i\Omega_A \leq \gamma_A \leq V_A - X (V_B-i\Omega_B)^{-1} X^T\, .
\label{simp sep 2}
\end{equation}
If $V_{B}-i\Omega_{B}$ is not invertible, we require instead $i\Omega_A \leq \gamma_A \leq V_A - X (V_B+\varepsilon\mathds{1}_{B}-i\Omega_B)^{-1} X^T$ for all $\varepsilon>0$.
\end{thm}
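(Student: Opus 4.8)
The plan is to prove the two implications in \eqref{simp sep 1} separately and then to read off \eqref{simp sep 2} as a direct application of Lemma~\ref{pos cond}. Necessity is immediate: if $V_{AB}$ is separable, Lemma~\ref{sep} supplies QCMs $\gamma_A\geq i\Omega_A$ and $\gamma_B\geq i\Omega_B$ with $V_{AB}\geq\gamma_A\oplus\gamma_B$, and since $\gamma_B\geq i\Omega_B$ this already gives $V_{AB}\geq\gamma_A\oplus i\Omega_B$.

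For sufficiency, suppose $\gamma_A\geq i\Omega_A$ is a QCM with $V_{AB}\geq\gamma_A\oplus i\Omega_B$, i.e.\ the Hermitian matrix $V_{AB}-\gamma_A\oplus i\Omega_B$ --- whose blocks in the notation of \eqref{V explicit} are $V_A-\gamma_A$, $X$, $X^T$, $V_B-i\Omega_B$ --- is positive semidefinite. The idea is to ``round'' the complex corner $V_B-i\Omega_B$ to a genuine real $n$-mode QCM $\gamma_B$ without losing positivity, for then $V_{AB}\geq\gamma_A\oplus\gamma_B$ and Lemma~\ref{sep} yields separability. The natural candidate is the Schur complement with respect to the $A$-block: assuming for a moment $V_A-\gamma_A>0$, Lemma~\ref{pos cond} gives $(V_B-i\Omega_B)-X^T(V_A-\gamma_A)^{-1}X\geq 0$, so $\gamma_B\coloneqq V_B-X^T(V_A-\gamma_A)^{-1}X$ is real symmetric with $\gamma_B\geq i\Omega_B$. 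Here I would invoke the (standard) fact that a real symmetric matrix dominating $i\Omega_B$ is automatically a legitimate QCM --- by Williamson's decomposition \eqref{Williamson} its symplectic eigenvalues are all $\geq 1$, hence it is strictly positive. Finally, the block matrix with corners $V_A-\gamma_A$, $X$, $X^T$, $V_B-\gamma_B$ has strictly positive $A$-block and vanishing Schur complement, hence is positive semidefinite; that is exactly $V_{AB}\geq\gamma_A\oplus\gamma_B$.

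To dispose of the extra hypothesis $V_A-\gamma_A>0$, I would run the same argument through the $\varepsilon$-regularised form of Lemma~\ref{pos cond}: for every $\varepsilon>0$ one has $(V_B-i\Omega_B)-X^T(V_A-\gamma_A+\varepsilon\mathds{1}_A)^{-1}X\geq 0$, so $\gamma_B^{\varepsilon}\coloneqq V_B-X^T(V_A-\gamma_A+\varepsilon\mathds{1}_A)^{-1}X$ is a QCM with $\gamma_B^{\varepsilon}\geq i\Omega_B$, and the block matrix with corners $V_A-\gamma_A+\varepsilon\mathds{1}_A$ (now strictly positive), $X$, $X^T$, $V_B-\gamma_B^{\varepsilon}$ is positive semidefinite by the same zero-Schur-complement computation. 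Letting $\varepsilon\to 0^+$, the quantity $X^T(V_A-\gamma_A+\varepsilon\mathds{1}_A)^{-1}X$ converges --- this is where positivity of the original block matrix re-enters, since it forces the column space of $X$ into that of $V_A-\gamma_A$ --- so $\gamma_B^{\varepsilon}\to\gamma_B^{0}$ for a QCM $\gamma_B^{0}\geq i\Omega_B$, and passing to the limit yields $V_{AB}\geq\gamma_A\oplus\gamma_B^{0}$; Lemma~\ref{sep} then finishes the proof.

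It remains to obtain \eqref{simp sep 2}. When $V_B>i\Omega_B$ the corner $V_B-i\Omega_B$ is invertible, so Lemma~\ref{pos cond} applied with that block as the pivot shows that $V_{AB}\geq\gamma_A\oplus i\Omega_B$ is equivalent to $V_A-\gamma_A-X(V_B-i\Omega_B)^{-1}X^T\geq 0$, i.e.\ $\gamma_A\leq V_A-X(V_B-i\Omega_B)^{-1}X^T$; imposing in addition $i\Omega_A\leq\gamma_A$ (which for a real symmetric $\gamma_A$ is precisely the requirement that it be a QCM) reproduces \eqref{simp sep 2}, and the non-invertible case is covered verbatim by the $\varepsilon$-version of Lemma~\ref{pos cond}. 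I expect the only real difficulty to lie in the sufficiency direction, and within it in the twin observations that the Schur complement one writes down is automatically a bona fide QCM and that the degenerate case $V_A-\gamma_A\not>0$ must be handled by regularisation and a limit whose existence is itself guaranteed by positivity of the block matrix.
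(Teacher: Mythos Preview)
Your proof is correct and follows essentially the same Schur-complement route as the paper: both arguments pivot on the $A$-block and recognise that the Schur complement $V_B-X^T(V_A-\gamma_A)^{-1}X$ is the candidate $\gamma_B$. The only cosmetic difference is in how the degenerate case $V_A-\gamma_A\not>0$ is handled---the paper passes to interiors and invokes topological closure of both sets to work with strict inequalities throughout, whereas you carry out the $\varepsilon$-regularisation and limit explicitly; these are equivalent devices.
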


\begin{proof}
Since both sets of QCMs $V_{AB}$ defined by~\eqref{sep eq} and~\eqref{simp sep 1} are clearly topologically closed, we can just show without loss of generality that their interiors coincide. This latter condition can be rephrased as an equivalence between the two following statements: (i) $V_{AB}>\gamma_{A}\oplus \gamma_{B}$ for some QCMs $\gamma_{A},\gamma_{B}$; and (ii) $V_{AB} > \gamma_A\oplus i\Omega_B$ for some QCM $\gamma_{A}$.

Now, once $\gamma_A<V_{A}$ is fixed, the supremum of all the matrices $\gamma_B$ satisfying $V_{AB}>\gamma_{A}\oplus \gamma_{B}$ is given by the Schur complement $(V_{AB}-(\gamma_A\oplus 0_B))/(V_A-\gamma_A)$, as the variational characterisation~\eqref{variational} reveals. Therefore, statement (i) is equivalent to the existence of $i\Omega_{A}\leq\gamma_A<V_{A}$ such that $(V_{AB}-(\gamma_A\oplus 0_B))/(V_A-\gamma_A)> i\Omega_B$. This is the same as to require $V_{AB}> \gamma_A\oplus i\Omega_B$, as the positivity conditions of Lemma~\ref{pos cond} immediately show.

Until now, we have proven that the separability of $V_{AB}$ can be restated as $V_{AB} \geq \gamma_A\oplus i\Omega_B$ for some appropriate QCM $\gamma_{A}$. Employing Lemma~\ref{pos cond}, we see that this is turn equivalent to~\eqref{simp sep 2}, or to its $\varepsilon$-modified version when $V_{B}-i\Omega_{B}$ is not invertible.
\end{proof}

\begin{rem}
It has been recently observed~\cite{Bhat16} that condition~\eqref{simp sep 2} is equivalent to the corresponding Gaussian state $\rho^G_{AB}(V_{AB})$ with QCM $V_{AB}$ being completely extendable with Gaussian extensions. We remind the reader that a bipartite state $\rho_{AB}$ is said to be {\it completely extendable} if for all $k$ there exists a state $\rho_{AB_{1}\cdots B_{k}}$ that is: (i) symmetric under exchange of any two $B_{i}$ systems; and (ii) an extension of $\rho_{AB}$ in the sense that $\text{Tr}_{B_{2}\cdots B_{k}}\rho_{AB_{1}\cdots B_{k}}=\rho_{AB}$. When the original state $\rho_{AB}^G$ is Gaussian, it is natural to consider extensions $\rho^G_{AB_{1}\cdots B_{k}}$ of Gaussian form as well. Interestingly enough, the above Theorem~\ref{simp sep lemma} provides a simple alternative proof of the remarkable fact (also proven in~\cite{Bhat16}) that Gaussian states are separable if and only if completely extendable with Gaussian extensions.
\end{rem}

{
\begin{rem}\label{remulti}
It is worth noticing that both Lemma~\ref{sep} and Theorem~\ref{simp sep lemma} extend straightforwardly to encompass the case of full separability of multipartite Gaussian states. In the case of Lemma~\ref{sep}, this extension was already formulated in~\cite{Werner01,3-mode sep}. As for Theorem~\ref{simp sep lemma}, the corresponding necessary and sufficient condition for the full separability of a $k$-partite QCM $V_{A_{1}\cdots A_{k}}$ would read  $V_{A_{1}\cdots A_{k}}\geq \gamma_{A_{1}}\oplus\ldots\oplus \gamma_{A_{k-1}}\oplus i\Omega_{A_{k}}$ for appropriate QCMs $\gamma_{1},\ldots, \gamma_{k-1}$.
\end{rem}
}

\section{PPT implies separability for 1 vs N-mode Gaussian states -- Revisited}\label{ppt}

We now focus on investigating known and new conditions under which separability becomes equivalent to PPT for Gaussian states, so that the problem of deciding whether a given QCM is separable or not admits a handy formulation.

For any bipartite state $\rho_{AB}$, recall that the PPT criterion provides a useful  necessary condition for separability~\cite{Peres}:
\begin{equation}\label{PeresPPT}\mbox{$\rho_{AB}$ is separable $\ \Rightarrow\ $ $\rho_{AB}^{T_B} \geq 0$}\,,
\end{equation} where the suffix $T_B$ denotes transposition with respect to the degrees of freedom of subsystem $B$ only. In finite-dimensional systems, PPT is also a sufficient condition for separability when $\dim(A) \cdot \dim(B)\leq 6$~\cite{H3}.

In continuous variable systems, the PPT criterion turns out to be also sufficient for separability of QCMs when either $A$ or $B$ is composed of one mode only.

\begin{thm}[PPT is sufficient for Gaussian states of $1$ vs $n$ modes~\cite{Simon00,Werner01}] \label{PPT thm} $ \\ $
Let $V_{AB}$ be a bipartite QCM such that either $A$ or $B$ are composed of one mode only. Then $V_{AB}$ is separable if and only if
\begin{equation}
V_{AB}\, \geq\, \begin{pmatrix} i\Omega_A & 0 \\ 0 & \pm i\Omega_B\end{pmatrix}\, =\, i\Omega_A \oplus (\pm i\Omega_B)\, ,
\label{PPT}
\end{equation}
which amounts to the corresponding Gaussian state being PPT, ${\rho^{G}}^{T_B}_{AB} \geq 0$.
\end{thm}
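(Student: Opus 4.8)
The plan is to reduce the PPT condition \eqref{PPT} to the simplified separability condition \eqref{simp sep 1} established in Theorem~\ref{simp sep lemma}. By the symmetry of the two parties in the statement, it suffices to treat the case where $A$ is a single mode, so $\Omega_A=\omega$ is a $2\times 2$ matrix. Without loss of generality I will work with the interior of the relevant sets (both the separable and the PPT sets of QCMs are closed), so I may assume all the inequalities below are strict and all the matrices that need to be inverted are invertible; the boundary cases follow by the usual $\varepsilon$-limiting argument as in the proof of Theorem~\ref{simp sep lemma}. After possibly performing a partial transposition on $B$ (which flips $\Omega_B\to-\Omega_B$ and sends the state to its partial transpose), the PPT hypothesis reads $V_{AB}\geq i\Omega_A\oplus i\Omega_B$ \emph{and} $V_{AB}\geq i\Omega_A\oplus(-i\Omega_B)$ simultaneously; by Theorem~\ref{simp sep lemma} it is then enough to produce a single-mode QCM $\gamma_A\geq i\Omega_A$ with $V_{AB}\geq\gamma_A\oplus i\Omega_B$.

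The key computational step is to take Schur complements with respect to the $B$ block. Writing $V_{AB}$ in the block form \eqref{V explicit}, the conditions $V_{AB}> i\Omega_A\oplus(\pm i\Omega_B)$ are equivalent, via Lemma~\ref{pos cond}, to $V_B>\pm i\Omega_B$ together with
\begin{equation}
\gamma_A^{\pm}\coloneqq V_A - X\,(V_B\mp i\Omega_B)^{-1}X^T\, >\, i\Omega_A\, .
\end{equation}
Here I use that $V_B> i\Omega_B$ and $V_B>-i\Omega_B$ are in fact the same condition (transposition, as noted after \eqref{Heisenberg}). Thus PPT hands me two real matrices $\gamma_A^{+}$ and $\gamma_A^{-}$, each a $2\times 2$ matrix satisfying $\gamma_A^{\pm}> i\omega$, hence in particular $\gamma_A^{\pm}>0$ as real symmetric matrices (the real part of a matrix $>i\omega$ is positive). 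What I want is \emph{one} QCM $\gamma_A$ with $i\omega\leq\gamma_A\leq \gamma_A^{+}$, because by Lemma~\ref{pos cond} (applied in reverse) that is exactly $V_{AB}\geq\gamma_A\oplus i\Omega_B$, which by Theorem~\ref{simp sep lemma} certifies separability. Note $\gamma_A^{-}$ is \emph{not} directly a candidate since it need not dominate $i\omega$; rather it dominates $-i\omega$, i.e.\ its transpose dominates $i\omega$.

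The natural candidate is a matrix mean of $\gamma_A^{+}$ and $(\gamma_A^{-})^T$, both of which are genuine single-mode QCMs. Concretely I would try $\gamma_A\coloneqq \gamma_A^{+}\#\,(\gamma_A^{-})^T$ (the geometric mean of \eqref{geometric}), or perhaps the harmonic/arithmetic mean; I expect the geometric mean to be the one that works cleanly, exploiting its congruence covariance \eqref{geom expl}. For the upper bound $\gamma_A\leq\gamma_A^{+}$ one uses that the geometric mean is monotone and that $(\gamma_A^{-})^T$ can be shown to be $\leq\gamma_A^{+}$ — this last inequality should drop out of the fact that $\gamma_A^{+}-(\gamma_A^{-})^T = X\big[(V_B+i\Omega_B)^{-1}-(V_B-i\Omega_B)^{-1}\big]X^T$ up to signs, which one massages using $V_B\pm i\Omega_B\geq0$. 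For the lower bound $\gamma_A\geq i\omega$ one uses that $\gamma_A^{+}\geq i\omega$ and $(\gamma_A^{-})^T\geq i\omega$ and that $i\omega$ is (the QCM of) a pure state, i.e.\ $(i\omega)^2=-\id$ in the relevant sense, so that it behaves as a fixed point of the mean; here the single-modeness of $A$ is essential, since a general multi-mode $i\Omega_A$ is not dominated in this way. The main obstacle I anticipate is precisely verifying that the chosen mean lies above $i\omega$: this is where the $2\times 2$ structure and the identity $\omega^2=-\id$ must be used, perhaps by diagonalising $\omega$ over $\mathds{C}$ and reducing to the commuting scalar case, or by invoking Lemma~\ref{lemma ha=g} to rewrite the relevant geometric mean in a form where the bound $\geq i\omega$ becomes manifest. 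If the geometric mean does not yield the lower bound directly, the fallback is to optimise over a one-parameter family interpolating between $\gamma_A^{+}$ and $(\gamma_A^{-})^T$ and use an intermediate-value / continuity argument to hit a QCM wedged between $i\omega$ and $\gamma_A^{+}$.
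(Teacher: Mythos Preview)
Your reduction to Theorem~\ref{simp sep lemma} and the Schur-complement computation leading to the two $2\times 2$ hermitian matrices $\gamma_A^{\pm}=V_A-X(V_B\mp i\Omega_B)^{-1}X^T$ with $\gamma_A^{\pm}>i\omega$ is exactly the paper's setup. The gap is in what you do next. The matrices $\gamma_A^{\pm}$ are \emph{not} real: $(V_B\mp i\Omega_B)^{-1}$ is genuinely complex, so $\gamma_A^{\pm}$ are hermitian with nontrivial imaginary antisymmetric part. The key structural fact you miss is that $\gamma_A^{+}$ and $\gamma_A^{-}$ are \emph{complex conjugates} of one another. This makes your candidate degenerate: for a hermitian matrix $H^T=H^*$, so $(\gamma_A^{-})^T=(\gamma_A^{-})^*=\gamma_A^{+}$, and $\gamma_A^{+}\#(\gamma_A^{-})^T=\gamma_A^{+}$, which is not real. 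If instead you intend $\gamma_A^{+}\#\gamma_A^{-}$ (which \emph{is} real, being $A\#A^*$), the upper bound fails: monotonicity would require $\gamma_A^{-}\leq\gamma_A^{+}$, but $\gamma_A^{+}-\gamma_A^{-}=2i\,\Im\gamma_A^{+}$ is purely imaginary antisymmetric and hence indefinite unless it vanishes. The same obstruction kills the arithmetic mean. Your side remark that ``$\gamma_A^{-}$ need not dominate $i\omega$'' contradicts your own derivation two lines earlier; both $\gamma_A^{\pm}$ dominate $i\omega$, the problem is that neither is real.

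The paper bypasses matrix means and instead proves a one-line $2\times 2$ lemma (Lemma~\ref{2x2 interval}): if $M,N$ are $2\times 2$ hermitian with $M\leq N$ and $M\leq N^*$, then there is a real symmetric $R$ with $M\leq R\leq N$. One simply takes $R=pM+(1-p)N$ with $p\in[0,1]$ chosen so that the single off-diagonal imaginary part cancels; this is where the $2\times 2$ (single-mode) hypothesis is actually used. Applying it with $M=i\omega$ and $N=\gamma_A^{-}$ (so $N^*=\gamma_A^{+}$) yields a real $\gamma_A$ with $i\omega\leq\gamma_A\leq\gamma_A^{-}$, and since $\gamma_A$ is real also $\gamma_A=\gamma_A^*\leq(\gamma_A^{-})^*=\gamma_A^{+}$, which is precisely \eqref{simp sep 2}. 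Your fallback ``interpolate and use an intermediate-value argument'' is the right instinct, but the interpolation must run between $i\omega$ and one of the $\gamma_A^{\pm}$, not between $\gamma_A^{+}$ and $\gamma_A^{-}$.
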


For completeness, we recall that the partial transpose of an $(m+n)$-mode QCM $V_{AB}$ (i.e., the covariance matrix of the partially transposed density operator ${\rho^{G}}^{T_B}_{AB}$) is given by $V_{AB}^{T_B}= \Theta_B V_{AB} \Theta_B$, where with respect to a mode-wise decomposition on the $B$ subsystem the matrix $\Theta_B$ can be written as $\Theta_B \coloneqq \id_A \oplus \big( \bigoplus_{j=1}^n \zeta\big)_B$, with $\zeta \coloneqq \left(\begin{smallmatrix}1 & 0 \\ 0 & -1 \end{smallmatrix}\right)$~\cite{Simon00}. Accordingly, we can say that the QCM $V_{AB}$ is PPT if and only if $V_{AB}^{T_B}$ is a valid QCM obeying~\eqref{Heisenberg}, which is equivalent to~\eqref{PPT}.

The original proof of Theorem~\ref{PPT thm} came in two steps. Firstly, Simon~\cite{Simon00} proved it in the particular case when both $A$ and $B$ are made of one mode only by performing an explicit analysis of the symplectic invariants of $V_{AB}$; this seminal analysis is quite straightforward to follow and particularly instructive, but eventually a bit cumbersome, since it requires to distinguish between three cases, according to the sign of $\det X$, where $X$ is the off-diagonal block of the QCM $V_{AB}$ partitioned as in (\ref{V explicit}). Later on, Werner and Wolf~\cite{Werner01} reduced the problem for the $1$ vs $n$-mode case with arbitrary $n$ to the $1$ vs $1$-mode case; the proof of this reduction is geometric in nature and rather elegant, but also relatively difficult.

Our purpose in this Section is to use Schur complements to provide the reader with a simple,
direct proof of Theorem~\ref{PPT thm}. Before coming to that, there is a preliminary lemma we want to discuss.

\begin{lemma} \label{2x2 interval}
Let $M,N$ be $2\times 2$ hermitian matrices. There is a real symmetric matrix $R$ satisfying $M\leq R\leq N$ if and only if $M\leq N, N^*$, where $*$ denotes complex conjugation.
\end{lemma}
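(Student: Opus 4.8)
My plan is to prove the two directions separately. The forward direction is immediate: if a real symmetric $R$ satisfies $M \leq R \leq N$, then since $R$ is real we have $R = R^*$, so $M \leq R^* \leq N^*$ as well, hence $M \leq N$ and $M \leq N^*$. The content is entirely in the converse, so that is where I would concentrate. Assume $M \leq N$ and $M \leq N^*$; I want to produce a real symmetric $R$ with $M \leq R \leq N$ (equivalently $M \leq R \leq N, N^*$ once $R$ is real). The natural first guess is to take the ``real part'' $R \coloneqq \tfrac12(N + N^*) = \operatorname{Re} N$, which is automatically real and symmetric. The upper bound $R \leq N$ is equivalent to $0 \leq N - R = \tfrac12(N - N^*) = i\,\operatorname{Im} N$, i.e. to $i\,\operatorname{Im}N \geq 0$; but $\operatorname{Im}N$ is a real antisymmetric $2\times 2$ matrix, so $\operatorname{Im}N = \begin{pmatrix} 0 & t \\ -t & 0\end{pmatrix} = t\,\omega$ for some real $t$, and $i t\, \omega \geq 0$ forces $t$ to have a definite sign; this need not hold, so the naive choice of $R$ fails in general and I will have to work harder.

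The fix is to exploit the $2\times 2$ structure more cleverly: I would look for $R$ on the segment between $M$ and $N$ of the form $R = N - itD$ for a suitable real antisymmetric $D$ (necessarily a multiple of $\omega$), or more symmetrically parametrize candidate real symmetric matrices $R$ lying in the order interval and reduce the existence question to a scalar inequality. Concretely, write $N = P + iQ$ with $P$ real symmetric and $Q$ real antisymmetric, $Q = q\,\omega$. Then $N^* = P - iQ$, and $M \leq N, N^*$ means $N - M \geq 0$ and $N^* - M \geq 0$, i.e. $(P - M) \pm iQ \geq 0$; averaging gives $P - M \geq 0$, i.e. $M \leq P$. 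I claim one can always interpolate: there is $s \in [0,1]$ such that $R \coloneqq M + s(P - M)$... no — better, I expect the right statement is that $R$ can be taken to be $P$ itself \emph{after correcting} $M$, and the key inequality to verify is $M \leq P$, which we already have, together with $P \leq N$, which fails. So the genuinely correct approach is: since the set of real symmetric $R$ with $M \leq R$ and the set with $R \leq N$ (equivalently $R \leq N$ and $R \leq N^*$, using reality) are both nonempty closed convex sets, I must show they intersect. Here the $2\times 2$-ness is essential — the analogous statement is false for $3\times 3$ — so the proof must use a dimension-specific fact, e.g. that a $2\times 2$ positive semidefinite matrix pencil can be simultaneously handled because $i\omega$ has eigenvalues $\pm 1$ and $\det$, $\operatorname{tr}$ determine everything.

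The cleanest route I would actually pursue: reduce to the case $M = 0$ by translating (replace $N$ by $N - M$, seek $R' = R - M \geq 0$ with $R' \leq N - M$ and $R'$ real symmetric — note $M$ real symmetric is not given, but one can first note the hypotheses force $\operatorname{Im} M$... actually $M$ hermitian, its imaginary part is antisymmetric; one reduces to $M$ real symmetric by a congruence or by absorbing $\operatorname{Im}M$). With $M = 0$: given $N \geq 0$ and $N^* \geq 0$, find real symmetric $R$ with $0 \leq R \leq N$. Take $R \coloneqq 2\,(N : N^*)$ or rather the parallel-sum/harmonic-type object — but simplest, take $R \coloneqq N \# N^*$, the geometric mean, which by \eqref{geom expl} and the fact that $N^* = \overline{N}$ is real symmetric (its entrywise conjugate equals itself: $\overline{N \# N^*} = \overline{N}\,\#\,\overline{N^*} = N^* \# N = N \# N^*$ by symmetry of $\#$), and which satisfies $R \leq \tfrac12(N + N^*) \leq N + \text{(something)}$... again the bound $R\le N$ is the sticking point. \textbf{The main obstacle}, clearly, is establishing the \emph{upper} bound $R \leq N$ for whatever real symmetric $R$ one constructs from $N$ and $N^*$; the lower bound and reality will be easy, but $R \leq N$ genuinely requires the $2\times 2$ hypothesis, and I expect the honest proof to go through an explicit diagonalization of the pencil $(N, N^*)$ or of $i\omega$-type data, writing $N$ in a canonical form under real orthogonal (or symplectic) congruence and checking the scalar $2\times 2$ case by hand — distinguishing whether $N - M$ is invertible, as Simon's original argument did. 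I would present the reduction to a normal form first, then dispatch the finitely many scalar cases.
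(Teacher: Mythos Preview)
Your plan never lands on a working argument, and the approaches you sketch all founder at exactly the point you identify: verifying the upper bound $R\leq N$. The candidates you build from $N$ alone --- $\operatorname{Re}N$, the geometric mean $N\# N^{*}$, a harmonic/parallel-sum construction --- give you reality and the lower bound $M\leq R$ essentially for free, but $R\leq N$ is not automatic and in fact fails in general for each of them. Your fallback of ``reduce to a normal form and check finitely many scalar cases'' is not fleshed out enough to count as a proof; you would still need to specify the normal form, justify the reduction, and carry out the case analysis.

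The idea you are missing is almost embarrassingly simple and is precisely what the paper does: build $R$ as a \emph{convex combination of $M$ and $N$} rather than from $N$ and $N^{*}$. For any $p\in[0,1]$, the matrix $R\coloneqq pM+(1-p)N$ automatically satisfies $M\leq R\leq N$ because $M\leq N$; the only thing left is to make $R$ real. But a $2\times 2$ hermitian matrix has a single imaginary scalar, namely $\operatorname{Im}R_{12}=p\,\operatorname{Im}M_{12}+(1-p)\,\operatorname{Im}N_{12}$, and since the hypotheses are invariant under $M\mapsto M^{*}$ and under $N\mapsto N^{*}$ separately, one may assume without loss of generality that $\operatorname{Im}M_{12}$ and $\operatorname{Im}N_{12}$ have opposite signs, whence some $p\in[0,1]$ kills the imaginary part. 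That is the whole proof. The $2\times 2$ hypothesis enters exactly here: reality of $R$ is a \emph{single} linear constraint on $p$, so the intermediate value theorem suffices; in higher dimensions there would be several such constraints and no single $p$ need work.
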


\begin{proof}
The only complex entry in a $2\times 2$ hermitian matrix is in the off-diagonal element. Suppose without loss of generality that $\Im M_{12}\ge0$ and $\Im N_{12}\le0$ (both conditions in the statement are in fact symmetric under complex conjugation of $M$ or $N$).
It is easy to verify that a $p$ such that $0 \leq p \leq 1$ and $\Im (pM+(1-p)N)_{12}=0$ always exists, and
we see that $R\coloneqq pM+(1-p)N$ is a real symmetric matrix. Moreover, since $R$ belongs to the segment joining $M$ and $N\geq M$ we conclude that $M\leq R\leq N$.
\end{proof}

\begin{rem}
Lemma~\ref{2x2 interval} admits an appealing physical interpretation which also leads to an intuitive proof. This interpretation is based on the fact that $2\times 2$ hermitian matrices can be seen as events in $4$-dimensional Minkowski space-time through the correspondence $x_0 \mathds{1} + \vec{x}\cdot\vec{\sigma} \leftrightarrow (x_0, \vec{x})$. Furthermore, $M\leq N$ translates in Minkowski space-time to `$N$ is in the absolute future of $M$', since the remarkable determinantal identity $\det (x_0 \mathds{1} + \vec{x}\cdot\vec{\sigma}) = x_0^2 - \vec{x}^2$ holds true. Now, the complex conjugation at the matrix level becomes nothing but a spatial reflection with respect to a fixed spatial plane in Minkowski space-time. Thus, our original question is: is it true that whenever both an event $N$ and its spatial reflection $N^*$ are in the absolute future of a reference event $M$ then there is another event $R$ which is: (i) in the absolute future of $M$; (ii) in the absolute past of both $N$ and $N^*$; and (iii) lies right on the reflection plane? The answer is clearly yes, and there is a simple way to obtain it. Start from $M$ and shoot a photon to the location of that event between $N$ and $N^*$ that will happen on the other side of the reflection plane. After some time the photon hits the plane, and this event $R$ clearly satisfies all requirements.
\end{rem}

Now we are ready to give our direct proof of the equivalence between PPT and separability for $1$ vs $n$-mode Gaussian states, leveraging the simplified separability condition of Theorem~\ref{simp sep lemma}.

\begin{proof}[Proof of Theorem~\ref{PPT thm}]
Suppose without loss of generality that $A$ is composed of one mode only. As in the proof of Theorem~\ref{simp sep lemma}, since both sets of QCMs $V_{AB}$ defined by~\eqref{sep eq} and~\eqref{PPT} are topologically closed, we can assume that $V_{AB}$ is in the interior of the PPT set, i.e.~that $V_{AB}>i\Omega_{A}\oplus (\pm i\Omega_{B})$. Our goal will be to show that in this case $V_{AB}$ belongs to the separable set, as characterized by Theorem~\ref{simp sep lemma}. Since $V_{B}-i\Omega_{B}$ is taken to be invertible, the PPT condition reads
\begin{equation*}
V_A - X (V_B \mp i\Omega_B)^{-1} X^T\, \geq\, i\Omega_{A}\, .
\end{equation*}
Now, define $M=i\Omega_{A}$ and $N=V_A - X (V_B + i\Omega_B)^{-1} X^T$, and observe that $N^{*}=V_A - X (V_B - i\Omega_B)^{-1} X^T$. Thanks to Lemma~\ref{2x2 interval}, we can find a real matrix $\gamma_{A}$ such that
\begin{equation*}
V_A - X (V_B \mp i\Omega_B)^{-1} X^T\, \geq\, \gamma_{A}\, \geq\, i\Omega_{A}\, .
\end{equation*}
Choosing the negative sign in the above inequality, we see that the second condition~\eqref{simp sep 2} in Theorem~\ref{simp sep lemma} is met, and therefore $V_{AB}$ is separable.
\end{proof}

\section{Gaussian states that are invariant under partial transpose are separable}\label{inva}

As a further example of application of Theorem~\ref{simp sep lemma}, we study here the separability of a special class of PPT Gaussian states, i.e.~those that are \emph{invariant} under partial transposition of one of the subsystems. This problem has an analogue in finite-dimensional quantum information, already studied in~\cite{sep 2xN}, where it was shown that bipartite states on $\mathds{C}^{2}\otimes \mathds{C}^{d}$ that are invariant under partial transpose on the first system are necessarily separable~\footnote{The proof reported in~\cite{sep 2xN} is rather long, so here we provide a shorter one, again based on Schur complements. A state on $\mathds{C}^{2}\otimes \mathds{C}^{d}$ that is invariant under partial transposition on the first subsystem can be represented in block form as $\rho=\left( \begin{smallmatrix} A & X \\ X & B \end{smallmatrix}\right)$. By a continuity argument, we can suppose without loss of generality that $A>0$. Rewrite $\rho = \left( \begin{smallmatrix} A & X \\ X & XA^{-1}X \end{smallmatrix}\right) + \ket{1}\!\!\bra{1}\otimes (B-XA^{-1}X)$. Both terms are positive by Lemma~\ref{pos cond}. Since the second one is separable, let us deal only with the first one, call it $\tilde{\rho}$. We have $\tilde{\rho} = \id_2 \otimes A^{1/2} \left( \begin{smallmatrix} \id & Y \\ Y & Y^{2} \end{smallmatrix}\right) \id_2 \otimes A^{1/2}$, where $Y\coloneqq A^{-1/2}X A^{-1/2}$ is hermitian. Denoting by $Y = \sum_{i} y_{i} \ket{e_{i}}\!\!\bra{e_{i}}$ its spectral decomposition, we obtain the following manifestly separable representation of $\tilde{\rho}$:
\begin{equation*}
\tilde{\rho} = \id_2 \otimes A^{1/2} \bigg( \sum_{i} \left( \begin{smallmatrix} 1 & y_{i} \\ y_{i} & y_{i}^{2} \end{smallmatrix} \right) \otimes \ket{e_i}\!\!\bra{e_i}\bigg) \id_2 \otimes A^{1/2}\, . \end{equation*}}.
Here we show that for Gaussian states an even stronger statement holds, in that invariance under partial transposition implies separability for any number of local modes.

\begin{cor}
A bipartite Gaussian state $\rho_{AB}^{G}$ that is invariant under partial transposition of one of the two subsystems is necessarily separable.
\end{cor}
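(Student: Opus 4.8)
The plan is to read the statement off the simplified separability criterion of Theorem~\ref{simp sep lemma}, using the fact that invariance under $T_B$ rigidifies the blocks of $V_{AB}$ just enough to make that criterion trivially satisfiable. Writing $V_{AB}$ in the block form \eqref{V explicit} and recalling that partial transposition acts as $V_{AB}^{T_B} = \Theta_B V_{AB} \Theta_B$ with $\Theta_B = \id_A \oplus \theta_B$ and $\theta_B \coloneqq \bigoplus_{j=1}^n \zeta$, I would first translate the hypothesis $V_{AB}^{T_B} = V_{AB}$ into the two block relations $X\theta_B = X$ (equivalently $\theta_B X^T = X^T$) and $\theta_B V_B \theta_B = V_B$ (equivalently $[V_B,\theta_B]=0$). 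Since the separable QCMs form a closed set and $V_{AB} + \varepsilon\id$ is again a QCM invariant under $T_B$ (because $\Theta_B^2 = \id$), I can assume without loss of generality that $V_B - i\Omega_B$ is invertible, i.e.\ $V_B > i\Omega_B$, and recover the general case at the end by letting $\varepsilon \to 0$.

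The key observation is that the two natural Schur-complement candidates coincide: using $X = X\theta_B$, $X^T = \theta_B X^T$, $\theta_B^2 = \id$, together with $\theta_B V_B \theta_B = V_B$ and $\theta_B \Omega_B \theta_B = -\Omega_B$, one gets
\begin{equation*}
X(V_B - i\Omega_B)^{-1}X^T = X\,\theta_B(V_B - i\Omega_B)^{-1}\theta_B\, X^T = X\big(\theta_B(V_B - i\Omega_B)\theta_B\big)^{-1}X^T = X(V_B + i\Omega_B)^{-1}X^T\, .
\end{equation*}
Consequently $N \coloneqq V_A - X(V_B - i\Omega_B)^{-1}X^T$ equals its own complex conjugate, hence is \emph{real} symmetric; note that in the general $1$ vs $n$-mode situation of Theorem~\ref{PPT thm} the analogous $N$ and its conjugate $N^*$ differ, and Lemma~\ref{2x2 interval} is invoked precisely to slip a real matrix between them, whereas here that interval degenerates to the single real point $N$.

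It then only remains to check that $N$ is a legitimate marginal QCM. Since $V_{AB}$ is a QCM we have $V_{AB} \geq i\Omega_A \oplus i\Omega_B$, so $V_{AB} - (i\Omega_A \oplus i\Omega_B) \geq 0$; taking the Schur complement of this matrix with respect to the invertible $B$-block $V_B - i\Omega_B$ (Lemma~\ref{pos cond}) gives $N - i\Omega_A \geq 0$, i.e.\ $i\Omega_A \leq N$. Choosing $\gamma_A \coloneqq N$ we obtain a real $\gamma_A$ with $i\Omega_A \leq \gamma_A \leq V_A - X(V_B - i\Omega_B)^{-1}X^T$, which is exactly condition \eqref{simp sep 2} of Theorem~\ref{simp sep lemma}; hence $V_{AB}$ is separable. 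I do not expect a serious obstacle: the only points that need a little care are the passage back to the non-invertible case (handled by the closedness/continuity argument above, or equivalently by taking $\gamma_A$ to be the infimum over $\varepsilon>0$ of the $\varepsilon$-regularised Schur complements, a decreasing net bounded below by $i\Omega_A$), and the bookkeeping of the action of $\theta_B$ — fixing $V_B$ while reversing the sign of $\Omega_B$ — which is what turns $(V_B - i\Omega_B)^{-1}$ into $(V_B + i\Omega_B)^{-1}$ under the sandwich by $X$.
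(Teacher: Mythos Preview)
Your proposal is correct and follows essentially the same route as the paper: both arguments use the block identities $X\theta_B=X$, $\theta_B V_B\theta_B=V_B$, $\theta_B\Omega_B\theta_B=-\Omega_B$ to show that $X(V_B-i\Omega_B)^{-1}X^T$ is real, and then take $\gamma_A=V_A-X(V_B-i\Omega_B)^{-1}X^T$ in condition \eqref{simp sep 2}. You are in fact slightly more careful than the paper, since you explicitly verify $\gamma_A\geq i\Omega_A$ via the Schur complement of the Heisenberg inequality and you handle the possibly non-invertible case $V_B-i\Omega_B\not>0$ by an $\varepsilon$-regularisation, both points the paper leaves implicit.
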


\begin{proof}
Without loss of generality, we can assume that the partial transpose on the $B$ system leaves the state invariant. We now show that under the this assumption the separability condition~\eqref{simp sep 2} is immediately satisfied, since the rightmost side is already a real, symmetric matrix. In fact, equating the original QCM~\eqref{V explicit} with the one obtained after partial transpose on the $B$ system, we get the identities $X=X \Theta_B$ and $V_{B}=\Theta_B V_{B}\Theta_B$, where, as previously set, ${\Theta}_B = \bigoplus_{j=1}^{n} \zeta$ according to a mode-wise decomposition of the $B$ system, and $\zeta = \left(\begin{smallmatrix}1 & 0 \\ 0 & -1 \end{smallmatrix}\right)$. As a consequence,
\begin{equation*}
X(V_{B}-i\Omega_{B})^{-1} X^{T} = X \Theta_{B} (V_{B} - i\Omega_{B})^{-1} \Theta_{B} X^{T} = X \left(\Theta_B (V_{B} - i\Omega_{B}) \Theta_B \right)^{-1} X^{T} = X \left(V_{B} + i\Omega_{B} \right)^{-1} X^{T}\, ,
\end{equation*}
where we used also $\Theta_B \Omega_B \Theta_B = -\Omega_B$. This shows that $X(V_{B}-i\Omega_{B})^{-1} X^{T}$ is equal to its complex conjugate, and is therefore (despite appearances) a real symmetric matrix. Hence the separability condition~\eqref{simp sep 2} is satisfied with $\gamma_{A} = V_A - X(V_{B}-i\Omega_{B})^{-1} X^{T}$, which is a legitimate QCM as follows from the bonafide condition~\eqref{Heisenberg} together with Lemma~\ref{pos cond}.
\end{proof}

\section{PPT implies separability for multimode mono-symmetric Gaussian states}\label{symm}

Throughout this Section, we show how the PPT criterion is also necessary and sufficient for deciding the separability of bipartite Gaussian states of $m$ vs $n$ modes that are symmetric under the exchange of any two among the first $m$ modes. These states will be referred to as {\em mono-symmetric} (with respect to the first party $A$).  As can be easily seen, this novel result (see Fig.~\ref{mononucleosi} for a graphical visualisation) is a generalisation of both Theorem~\ref{PPT thm} and of one of the main results in~\cite{Serafini05}, where the subclass of bi-symmetric states was considered instead, bi-symmetric meaning that they are invariant under swapping any two modes either within the first $m$ or within the last $n$ (that is, they are mono-symmetric in both $A$ and $B$).

\begin{thm}[Symplectic localisation of mono-symmetric states]  \label{PPt sym}
Let $\rho^G_{AB}(V_{AB})$ be a mono-symmetric Gaussian state of $m+n$ modes, i.e.~specified by a QCM $V_{AB}$ that is symmetric under the exchange of any two of the $m$ modes of subsystem $A$. Then there exists a local unitary operation on $A$ corresponding to a symplectic transformation $S_A \in \mathrm{Sp}(2m, \mathds{R})$ that transforms $\rho^G_{AB}$ into the tensor product of $m-1$ uncorrelated single-mode Gaussian states $\tilde{\rho}^G_{A_j}(\tilde{V}_{A_j})$ ($j=2,\ldots,m$) and a bipartite Gaussian state $\tilde{\rho}^G_{A_1B}(\tilde{V}_{A_1B})$ of $1$ vs $n$ modes. At the QCM level, this reads
\begin{equation}\label{monolocale}
(S_A \oplus \mathds{1}_B) V_{AB} (S_A^T \oplus \mathds{1}_B) = \bigg(\bigoplus_{j=2}^m \tilde{V}_{A_j}\bigg) \oplus \tilde{V}_{A_1B} \,.
\end{equation}
The separability properties of $V_{AB}$ and $\tilde{V}_{A_1 B}$ are equivalent, in particular $\rho^G_{AB}(V_{AB})$ is separable if and only if it is PPT.
\end{thm}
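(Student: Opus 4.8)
The plan is to exploit the permutation symmetry on the $A$ side at the level of the covariance matrix. Writing $V_{AB}$ in the block form of \eqref{V explicit}, the mono-symmetry hypothesis means that $V_A$ is invariant under conjugation by any permutation of the $m$ single-mode blocks, and that the off-diagonal block $X$ (mapping $B$'s $2n$ coordinates to $A$'s $2m$ coordinates) has all its $m$ single-mode rows equal; equivalently $X = (1,\dots,1)^T \otimes X_0$ for some $2\times 2n$ matrix $X_0$, up to the ordering convention. A matrix on $A$ with this $S_m$-symmetry is block-circulant-like: $V_A$ has the form $\mathds{1}_m \otimes P + (\text{all-ones})_m \otimes Q$ for $2\times 2$ blocks $P,Q$. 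The key classical fact is that such symmetric structure is diagonalised by an orthogonal transformation acting on the ``mode label'' index — concretely, take any $m\times m$ orthogonal matrix $O$ whose first row is $(1,\dots,1)/\sqrt m$ (a discrete-Fourier / Helmert matrix), and set $S_A = O \otimes \mathds{1}_2$. This $S_A$ is orthogonal and acts as identity on each $2\times 2$ phase-space block, hence it is symplectic: $S_A \Omega_A S_A^T = (O\otimes \mathds{1}_2)(\mathds{1}_m \otimes \omega)(O^T \otimes \mathds{1}_2) = \mathds{1}_m \otimes \omega = \Omega_A$. So $S_A \in \mathrm{Sp}(2m,\mathds{R})$ and it implements a genuine local Gaussian unitary on $A$.

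Next I would compute $(S_A \oplus \mathds{1}_B) V_{AB} (S_A^T \oplus \mathds{1}_B)$ blockwise. Since $S_A$ sends $(1,\dots,1)^T/\sqrt m$ to the first coordinate vector $e_1$, conjugating $X$ gives $S_A X = \sqrt m\, e_1 \otimes X_0$, i.e.~only the first mode of $A$ retains any correlation with $B$; the remaining $m-1$ modes decouple from $B$. For the $A$-block, $S_A V_A S_A^T$ becomes block-diagonal in the mode label: the first $2\times 2$ block is $P + mQ =: \tilde V_{A_1}$ and the other $m-1$ blocks are each equal to $P =: \tilde V_{A_j}$, uncorrelated with everything else. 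This is exactly \eqref{monolocale}, with $\tilde V_{A_1 B} = \left(\begin{smallmatrix} \tilde V_{A_1} & \sqrt m\, X_0 \\ \sqrt m\, X_0^T & V_B \end{smallmatrix}\right)$ a $1$ vs $n$-mode QCM (it is a valid QCM because it is a principal-type submatrix of the congruence of a QCM, hence satisfies \eqref{Heisenberg}), and each $\tilde V_{A_j}$ a single-mode QCM.

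Finally I would draw the separability conclusion. A local symplectic unitary on $A$ preserves both separability and the PPT property, so it suffices to treat the right-hand side of \eqref{monolocale}. Each $\tilde\rho^G_{A_j}$ is a single-mode state, trivially a product factor, so $\bigoplus_{j=2}^m \tilde V_{A_j}$ contributes a manifestly separable tensor factor on $A$'s side; therefore $V_{AB}$ is separable if and only if $\tilde V_{A_1 B}$ is separable, and likewise $V_{AB}$ is PPT iff $\tilde V_{A_1 B}$ is PPT (partial transposition on $B$ commutes with the decomposition and acts as identity on the $\tilde V_{A_j}$). But $\tilde V_{A_1 B}$ is a $1$ vs $n$-mode QCM, for which Theorem~\ref{PPT thm} gives the equivalence of separability and PPT. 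Combining these equivalences yields that $\rho^G_{AB}(V_{AB})$ is separable if and only if it is PPT. The main obstacle, and the only place needing real care, is the bookkeeping in the first two paragraphs: pinning down the precise form of an $S_m$-invariant QCM under the $(x_1,p_1,\dots,x_m,p_m)$ ordering convention, checking that the diagonalising $O\otimes\mathds{1}_2$ genuinely lands in $\mathrm{Sp}(2m,\mathds{R})$ rather than merely in $O(2m)$, and verifying that the resulting diagonal blocks are bona fide QCMs; the separability/PPT transfer at the end is then essentially immediate.
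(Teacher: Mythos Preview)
Your proposal is correct and follows essentially the same route as the paper: both exploit the $S_m$-invariance to write $V_A$ and $X$ in the tensor form $\mathds{1}_m\otimes(\cdot)+\text{(all-ones)}\otimes(\cdot)$, apply $S_A=O\otimes\mathds{1}_2$ with $O$ orthogonal sending $(1,\ldots,1)^T/\sqrt m$ to $e_1$, verify symplecticity via $\Omega_A=\mathds{1}_m\otimes\omega$, and then invoke Theorem~\ref{PPT thm} on the resulting $1$ vs $n$ block. Your $P,Q,X_0$ correspond to the paper's $\alpha-\varepsilon,\varepsilon,(\kappa_1\,\cdots\,\kappa_n)$, and your added remarks on why separability and PPT transfer through the local symplectic and the uncorrelated single-mode factors make explicit what the paper leaves implicit.
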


\begin{proof}
We will prove~\eqref{monolocale} directly at the QCM level, by constructing a suitable local symplectic $S_A$.
By virtue of the symmetry under the exchange of any two modes of subsystem $A$, if we decompose $V_{AB}$ as in~\eqref{V explicit}, the submatrices $V_A$ and $X$ have the following structure:
\begin{equation}\label{strucaz}
V_{A}\, =\, \begin{pmatrix} \alpha & \varepsilon & \ldots & \varepsilon \\[-1ex] \varepsilon & \alpha & & \vdots \\[-1ex] \vdots & & \ddots & \varepsilon \\[-0.5ex] \varepsilon & \ldots & \varepsilon & \alpha \end{pmatrix}\, ,\qquad
X\, =\, \begin{pmatrix} \kappa_{1} & \kappa_{2} & \ldots & \kappa_{n} \\ \kappa_{1} & \kappa_{2} & \ldots & \kappa_{n} \\ \vdots & & & \vdots \\ \kappa_{1} & \kappa_{2} & \ldots & \kappa_{n} \end{pmatrix} \, ,
\end{equation}
where each one of the blocks $\alpha,\varepsilon,\kappa_{j}$ in (\ref{strucaz}) is a $2\times 2$ real matrix, with $\alpha$ and $\varepsilon$ symmetric~\cite{adescaling}.

We can now decompose the real space of the first $m$ modes as $\mathds{R}^{2m}=\mathds{R}^{m} \otimes \mathds{R}^{2}$. According to this decomposition, we may rewrite $V_A$ and $X$ as follows:
\begin{equation}
V_{A}\, =\, \mathds{1}_m\otimes (\alpha-\varepsilon) + m\ket{+}\!\!\bra{+}\otimes \varepsilon\, ,\qquad X\, =\, \sqrt{m}\,\sum_{j=1}^{n} \ket{+}\!\!\bra{j}\otimes \kappa_{j}\, ,
\end{equation}
where $\ket{+}=\frac{1}{\sqrt{m}}\sum_{i=1}^{m}\ket{i}$, with $\{\ket{i}\}_{i=1}^m$ denoting the standard basis for $\mathds{R}^m$. Observe that the symplectic form $\Omega_A$ on subsystem $A$ decomposes accordingly as $\Omega_A = \mathds{1}_m\otimes \omega$. If $O$ is an $m\times m$ orthogonal  matrix such that $O\ket{+}=\ket{1}$, we easily see that on the one hand $O\otimes\mathds{1}_2\ \Omega_A\ O^{T}\otimes\mathds{1}_2 = \Omega_A$, i.e.~$O\otimes\mathds{1}_2$ is symplectic, while on the other hand
\begin{align}
O\otimes\mathds{1}_2\ V_{A}\ O^{T}\otimes\mathds{1}_2\, &=\, \ket{1}\!\!\bra{1} \otimes (\alpha + (m-1)\varepsilon) + \sum_{i=2}^{m} \ket{i}\!\!\bra{i}\otimes (\alpha-\varepsilon)\, =\, \begin{pmatrix} \alpha+(m-1)\varepsilon & 0 & \ldots & 0 \\[-1ex] 0 & \alpha-\varepsilon & & \vdots \\[-1ex] \vdots & & \ddots & 0 \\[-0.5ex] 0 & \ldots & 0 & \alpha-\varepsilon \end{pmatrix}\, , \\[2ex]
O\otimes\mathds{1}_2\ X \, &=\, \sqrt{m}\, \sum_{j=1}^{n} \ket{1}\!\!\bra{j}\otimes \kappa_{j}\, =\, \begin{pmatrix} \sqrt{m}\,\kappa_{1} & \sqrt{m}\, \kappa_{2} & \ldots & \sqrt{m}\,\kappa_{n} \\ 0 & 0 & \ldots & 0 \\ \vdots & & & \vdots \\ 0 & 0 & \ldots & 0 \end{pmatrix}\, .
\end{align}
Therefore, the initial QCM $V_{AB}$ has been decomposed as a direct sum of $m-1$ one-mode QCMs $\tilde{V}_{A_j}=\alpha-\varepsilon$, and of one $(1+m)$-mode QCM $\tilde{V}_{A_1B}$, via a local symplectic operation on subsystem $A$, given precisely by $S_A = O \otimes \mathds{1}_2$. This proves (\ref{monolocale}) constructively. Applying Theorem~\ref{PPT thm}, one then gets immediately that the PPT condition is necessary and sufficient for separability in this case.
\end{proof}

\begin{rem}
This original result yields a substantial enlargement to the domain of validity of PPT as a necessary and sufficient criterion for separability of multimode Gaussian states, reaching beyond any existing literature. In practice, Theorem~\ref{PPt sym} tells us that, in any mono-symmetric Gaussian state, all the correlations (including and beyond entanglement) shared among the whole $m$ modes of $A$ and the whole $n$ modes of $B$ can be  localised onto correlations between a single mode $A_1$ of $A$ vs the whole $B$, by means of a local unitary (symplectic at the QCM level) operation at $A$'s side only. Being unitary, this operation is fully reversible, meaning that the correlations with $B$ can be redistributed back and forth between $A_1$ and the whole set of $A$ modes with no information loss. This also means that quantitative results on any measure of such correlations between $A$ and $B$ encoded in $V_{AB}$ can be conveniently evaluated in the much simpler  $1$ vs $n$-mode normal form $\tilde{V}_{A_1B}$ constructed in the proof Theorem~\ref{PPt sym}, ignoring the $m-1$ uncorrelated modes.

In the special case of $V_{AB}$ being the QCM of a bi-symmetric state, i.e.~with full permutation symmetry within both $A$ and $B$, it is immediate to observe that applying a similar construction by means of a local unitary at $B$'s side as well fully reduces $V_{AB}$ to a two-mode QCM $\tilde{V}_{A_1 B_1}$, with equivalent entanglement properties as the original $V_{AB}$, plus a collection of $m+n-2$ uncorrelated single modes. This reproduces the findings of~\cite{Serafini05}.

Similarly to what discussed in Remark~\ref{remulti}, the results of Theorem~\ref{PPt sym} can also be straightforwardly extended to characterise full separability and, conversely, multipartite entanglement of arbitrary multimode Gaussian states which are partitioned into $k$ subsystems, with the requirement of local permutation invariance within some of these subsystems. It is clear that, by suitable local symplectic transformations, each of those locally symmetric parties can be localised onto a single mode correlated with the remaining parties, thus removing the redundancy in the QCM. Gaussian states of this sort generalise the so-called multi-symmetric states studied in~\cite{moleculo}, where local permutation invariance was enforced within all of the subsystems, resulting in a direct multipartite analogue of bi-symmetric states.

\end{rem}
\section{PPT implies separability for multimode isotropic Gaussian states --  Revisited} \label{mode-wise}

{
It is well known that the PPT criterion is in general sufficient, as well as obviously necessary, for pure bipartite states to be separable~\cite{Peres}. This may be seen by a direct inspection of the Schmidt decomposition of a pure state. Let us note, incidentally, that a stronger statement holds, namely any bound entangled state (in any dimension) must have at least rank $4$~\cite{chen08}.

The Schmidt decomposition theorem is in fact so important that a Gaussian version of it, that is, the determination of a normal form of pure QCMs under local symplectic operations, is of central importance in continuous variable quantum information.
As can be shown at the covariance matrix level~\cite{holwer,giedkemode} or at the density operator level~\cite{botero03}, every pure bipartite Gaussian state $\rho^G_{AB}(V_{AB})$ can be brought into a tensor product of two-mode squeezed vacuum states and single-mode vacuum states by means of local unitaries with respect to the $A$ vs $B$ partition. In particular, by acting correspondingly with local symplectic transformations, any pure QCM $V_{AB}$ (where pure means $\det V_{AB} = 1$) can be transformed into a direct sum of (pure) two-mode squeezed vacuum QCMs and (pure) single-mode vacuum QCMs.
More precisely, at the level of QCMs, one can formulate this fundamental result as follows.

\begin{thm}[Mode-wise decomposition of pure Gaussian states~\cite{holwer,botero03,giedkemode}] \label{mode-wise thm}
Let $V_{AB}$ be a bipartite QCM of $m+n$ modes $A_1,\ldots,A_m,B_1,\ldots,B_n$, assuming $m \leq n$ (with no loss of generality). If $V_{AB}$ is a pure QCM, i.e.~all its symplectic eigenvalues are equal to $1$ (which amounts to $\det V_{AB}=1$), then there exist local symplectic transformations $S_A \in \mathrm{Sp}(2m,\mathds{R})$, $S_B \in \mathrm{Sp}(2n,\mathds{R})$ mapping $V_{AB}$ into the following normal form:
\begin{equation}\label{modewise}
(S_A \oplus S_B) V_{AB} (S_A^T \oplus S_B^T) = \bigoplus_{j=1}^m \bar{V}_{A_jB_j}(r_j)\, \oplus \bigoplus_{k=m+1}^n \mathds{1}_{B_k}\,,
\end{equation}
where $\bar{V}_{A_jB_j} = \begin{pmatrix} c_j \id & s_j {\zeta} \\  s_j {\zeta}  & c_j \id \end{pmatrix}$
with $c_{j}=\cosh(2r_j)$ and $s_j=\sinh(2r_j)$, for a real squeezing parameter $r_j$, is the pure QCM of a two-mode squeezed vacuum state of modes $A_j$ and $B_j$, and $\mathds{1}_{B_k}$ is the pure QCM of the single-mode vacuum state of mode $B_k$.
In particular, with respect to the block form~\eqref{V explicit}, for any pure QCM $V_{AB}$ the marginal QCMs $V_A$ and $V_B$ have matching symplectic spectra, given by
$\vec{\nu}_A=(c_1,\ldots,c_m)$ and $\vec{\nu}_B=(c_1,\ldots,c_m,\underbrace{1,\ldots,1}_{n-m})$.
\end{thm}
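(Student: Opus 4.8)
The plan is to establish the mode-wise decomposition \eqref{modewise} by combining the Williamson normal form with a clever use of the purity constraint $V_{AB}\Omega V_{AB}\Omega=-\id$, together with a simultaneous-diagonalisation argument for the marginals. First I would bring the $A$-marginal $V_A$ into Williamson form by a local symplectic $S_A$: since $m\le n$ and $V_A>0$, there is $S_A\in\mathrm{Sp}(2m,\mathds{R})$ with $S_AV_AS_A^T=\bigoplus_{j=1}^m c_j\id_2$ for symplectic eigenvalues $c_j\ge1$ (Williamson's theorem, \eqref{Williamson}). After this conjugation the purity condition on the whole QCM forces a rigid structure on the off-diagonal block $X$ and the $B$-marginal; concretely, for a pure state the Heisenberg inequality is saturated, $V_{AB}+i\Omega_{AB}\ge0$ with rank exactly $m+n$, so the kernel of $V_{AB}+i\Omega_{AB}$ is an $(m+n)$-dimensional isotropic subspace, and reading off the block structure shows each mode $A_j$ with $c_j>1$ must be paired with a single effective mode direction in $B$, while the $c_j=1$ modes decouple entirely.

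Next I would make this pairing explicit. Writing $X$ in the basis where $V_A$ is diagonal, the purity relation (equivalently, $V_B - X^TV_A^{-1}X$ being the Schur complement that must equal the pure-state value dictated by $V_A$) pins down $X^TX$ and hence, by a singular-value / polar decomposition of $X$, produces an orthogonal-and-then-symplectic change of basis $S_B\in\mathrm{Sp}(2n,\mathds{R})$ on $B$ aligning the $j$-th column block of $X$ with the $j$-th mode of $B$. One then checks that each resulting $2\times2$ off-diagonal block is proportional to $\zeta$ with coefficient $s_j=\sqrt{c_j^2-1}=\sinh(2r_j)$ (choosing $r_j$ so that $c_j=\cosh(2r_j)$), and that the corresponding $B_j$-marginal block is $c_j\id$, so that $\bar V_{A_jB_j}(r_j)$ is exactly the two-mode squeezed vacuum QCM; the leftover $n-m$ modes of $B$ carry the identity $\id_2$, i.e.\ vacuum. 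This yields \eqref{modewise}, and the claim about the symplectic spectra of $V_A$ and $V_B$ is then immediate by inspection of the normal form, since a two-mode squeezed vacuum has both single-mode marginals equal to $c_j\id$ and vacuum contributes symplectic eigenvalue $1$.

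The main obstacle I expect is the bookkeeping in the second step: turning the saturated uncertainty relation into a genuine mode-by-mode pairing requires care that the symplectic (not merely orthogonal) freedom on $B$ suffices to simultaneously normalise all blocks, and that no residual correlations survive between distinct pairs or between a pair and a vacuum mode. The cleanest route is probably to avoid heavy computation by invoking the characterisation of pure QCMs as symplectic matrices, $V_{AB}=(S^TS)^{-1}$ with $S\in\mathrm{Sp}(2(m+n),\mathds{R})$, and then decompose $S$ using the structure of $\mathrm{Sp}$ together with the local constraint; alternatively, one can cite the density-operator Schmidt decomposition for Gaussian states \cite{botero03} and merely translate it to covariance matrices, observing that a Gaussian Schmidt decomposition into two-mode squeezed vacua and local vacua is precisely \eqref{modewise}. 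Either way, once the normal form is in hand the statement about matching marginal spectra is a one-line consequence and needs no separate argument.
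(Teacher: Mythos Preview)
The paper does not supply its own proof of Theorem~\ref{mode-wise thm}: the result is stated with attribution to \cite{holwer,botero03,giedkemode} and then used as background, the whole point of Section~\ref{mode-wise} being to prove Theorem~\ref{iso thm} \emph{without} invoking the mode-wise decomposition. So there is no in-paper proof to compare your proposal against.

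That said, your outline is broadly in the spirit of the original arguments in the cited references (Williamson on one marginal, then exploit purity to pin down the off-diagonal block and the other marginal). You correctly identify the delicate point: after diagonalising $V_A$, one must show that a genuinely \emph{symplectic} $S_B$ (not merely orthogonal) can be found that simultaneously aligns all column blocks of $X$ with single $B$-modes and brings $V_B$ to the required form, with no residual cross-correlations. Your sketch gestures at a singular-value or polar decomposition of $X$, but as written this is not yet a proof: the SVD gives orthogonal transformations, and you have not argued why the required basis change on $B$ respects the symplectic form, nor why the $2\times2$ off-diagonal blocks come out proportional to $\zeta$ rather than some other matrix. The cleanest way to close this is the one you mention last, namely to invoke the Gaussian Schmidt decomposition at the density-operator level \cite{botero03} and translate; alternatively, one can work with the complexified phase space and the kernel of $V_{AB}+i\Omega$ to make the pairing rigorous, as in \cite{holwer,giedkemode}. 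Either route is fine, but the middle paragraph of your proposal, as it stands, is a plan rather than an argument.
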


Leaving apart its far-reaching applications, in the context of the present paper this result is mainly instrumental for assessing the separability of so-called {\it isotropic} multimode Gaussian states. The QCM of any such state of $m+n$ modes is characterised by the property of having a completely degenerate symplectic spectrum, i.e.~formed of only one distinct symplectic eigenvalue $\nu \geq 1$ (repeated $m+n$ times). This means that the QCM $V_{AB}$ of any isotropic state is proportional by a factor $\nu$ to a pure QCM. Hence, Theorem~\ref{mode-wise thm} tells us that $V_{AB}$ can be brought into a direct sum of two-mode QCMs via a local symplectic congruence (local with respect to any partition into groups of modes $A$ and $B$), as first observed in~\cite{holwer}. Thanks to Theorem~\ref{PPT thm}, this guarantees the following.

\begin{thm} \label{iso thm}
The PPT criterion is necessary and sufficient for separability of all isotropic Gaussian states of an arbitrary number of modes.
\end{thm}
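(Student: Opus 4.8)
The plan is to obtain Theorem~\ref{iso thm} as a corollary of the mode-wise decomposition (Theorem~\ref{mode-wise thm}) together with the $1$ vs $1$-mode instance of Theorem~\ref{PPT thm}. The starting observation is the one recorded just above the statement: an isotropic QCM is nothing but a rescaled pure QCM. Concretely, if $V_{AB}$ has completely degenerate symplectic spectrum $\nu\geq 1$, then $V_0\coloneqq \nu^{-1}V_{AB}$ has all symplectic eigenvalues equal to $1$, i.e.\ it is a pure QCM (indeed $V_0+i\Omega = \nu^{-1}S(\mathds{1}+i\Omega)S^T\geq 0$ with $S$ the Williamson symplectic of $V_{AB}$).

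First I would apply Theorem~\ref{mode-wise thm} to $V_0$, obtaining local symplectics $S_A\in\mathrm{Sp}(2m,\mathds{R})$ and $S_B\in\mathrm{Sp}(2n,\mathds{R})$ bringing $(S_A\oplus S_B)V_0(S_A^T\oplus S_B^T)$ into a direct sum of two-mode squeezed vacuum blocks $\bar{V}_{A_jB_j}(r_j)$ and single-mode vacuum blocks $\mathds{1}_{B_k}$. Since congruence commutes with scalar multiplication, the same transformation applied to $V_{AB}=\nu V_0$ yields
\begin{equation*}
(S_A\oplus S_B)\,V_{AB}\,(S_A^T\oplus S_B^T)\,=\,\bigoplus_{j=1}^m \nu\,\bar{V}_{A_jB_j}(r_j)\,\oplus\,\bigoplus_{k=m+1}^n \nu\,\mathds{1}_{B_k}\,.
\end{equation*}
A one-line check confirms each rescaled block is still a genuine QCM: for any pure QCM $\bar{V}$ and $t\in(0,1]$ one has $\bar{V}+it\Omega=(1-t)\bar{V}+t(\bar{V}+i\Omega)\geq 0$, so $\nu\bar{V}+i\Omega\geq 0$ by taking $t=1/\nu$.

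Next I would use that both separability and the PPT property of a QCM are invariant under local symplectic congruences (immediate from Lemma~\ref{sep} by conjugating $\gamma_A\oplus\gamma_B$, and from the invariance of condition \eqref{PPT}), so it suffices to settle the claim for the block-diagonal normal form above. Since the Heisenberg condition and the PPT condition $V^{T_B}+i\Omega\geq 0$ both split over an orthogonal direct sum of modes, the normal form is PPT if and only if every block is PPT; conversely, a direct sum of separable QCMs is separable (direct-sum the local witnesses, then regroup the $A$- and $B$-modes by a local mode permutation), and the single-mode blocks $\nu\mathds{1}_{B_k}$, being uncorrelated with $A$, are trivially separable. Everything therefore reduces to the two-mode blocks $\nu\bar{V}_{A_jB_j}(r_j)$, which are $1$ vs $1$-mode QCMs: Theorem~\ref{PPT thm} applies verbatim and states that each is separable if and only if it is PPT. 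Chaining ``normal form PPT $\iff$ every block PPT $\iff$ every block separable $\Rightarrow$ normal form separable $\Rightarrow$ normal form PPT'' closes the loop.

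I do not expect a genuine obstacle once Theorems~\ref{mode-wise thm} and~\ref{PPT thm} are granted---the statement is essentially a corollary, and the only care required is the bookkeeping that rescaling by $\nu\geq 1$ preserves QCM validity and that separability and PPT behave well under direct sums and local symplectics; the one conceptual point worth stressing is that the mode-wise decomposition must be local with respect to the $A$ vs $B$ split, which is precisely what Theorem~\ref{mode-wise thm} guarantees. (An alternative, more intrinsic route would bypass Theorem~\ref{mode-wise thm} and attack the normal form---or even $V_{AB}$ itself---directly via the simplified criterion of Theorem~\ref{simp sep lemma}, exhibiting the witness $\gamma_A$ from the Schur complement $V_A-X(V_B-i\Omega_B)^{-1}X^T$ with the help of the operator geometric mean; there the main difficulty would be upgrading the $2\times2$ interpolation of Lemma~\ref{2x2 interval} to the regime $m>1$.)
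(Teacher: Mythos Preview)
Your proposal is correct, and it is in fact the ``traditional'' argument that the paper explicitly acknowledges just before stating the theorem: reduce an isotropic QCM to $\nu$ times a pure QCM, invoke the mode-wise decomposition (Theorem~\ref{mode-wise thm}) to localise into $1$ vs $1$ blocks, and then apply Theorem~\ref{PPT thm} blockwise. Your bookkeeping (rescaling preserves the Heisenberg condition, separability and PPT are stable under local symplectic congruences and direct sums) is all sound.

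However, the paper's own proof deliberately takes a \emph{different} route, precisely the ``alternative, more intrinsic'' one you sketch in your closing parenthetical. It bypasses Theorem~\ref{mode-wise thm} entirely: starting from the PPT condition rewritten via Lemma~\ref{QCM geom lemma} as $V\geq V\#(\Theta_B\Omega\Theta_B V^{-1}\Theta_B\Omega^T\Theta_B)$, it applies the arithmetic--harmonic identity for the geometric mean (Lemma~\ref{lemma ha=g}) and then uses the pure-QCM identity $gV=\Omega(gV)^{-1}\Omega^T$ (valid exactly because $V$ is isotropic) to show that the right-hand side collapses into a block-diagonal $\gamma_A\oplus\gamma_B$ with $\gamma_A=P\#(\Omega P^{-1}\Omega^T)$, $\gamma_B=Q\#(\Omega Q^{-1}\Omega^T)$, which are automatically valid QCMs. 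Separability then follows from Lemma~\ref{sep} directly, with no normal form and no reduction to $1$ vs $1$ modes. What your approach buys is conceptual simplicity once the (nontrivial) mode-wise theorem is in hand; what the paper's approach buys is a self-contained construction of the separability witness $\gamma_A\oplus\gamma_B$ via matrix means, independent of any structural decomposition of pure Gaussian states.
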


However, notwithstanding the importance of Theorem~\ref{mode-wise} per se, one could strive to seek a more direct way to obtain Theorem~\ref{iso thm}. Our purpose in this Section is in fact to provide an alternative proof of this result, which does not appeal to the mode-wise decomposition theorem at all,  and uses directly Lemma~\ref{sep} instead, leveraging matrix analysis tools such as the notions of matrix means introduced in Section~\ref{secMM}.

Note that, in almost all the remainder of this Section, for a single system of $n$ modes, we will find it more convenient to reorder the vector of canonical operators as $r \coloneqq  (x_1,\ldots,x_n,p_1,\ldots,p_n)$, corresponding to a position-momentum block structure. The symplectic form $\Omega$ appearing in~\eqref{CCR} is accordingly rewritten as
\begin{equation}\label{CCR2}
\Omega \coloneqq
 \begin{pmatrix} 0 & \mathds{1} \\ -\mathds{1} & 0 \end{pmatrix}\,.
  \end{equation}
  We will then write any QCM $V$, as well as any symplectic operation $S$ acting on it, with respect to this alternative block structure, unless explicitly stated otherwise.

Let us start with a preliminary result, equivalent to Proposition 12 of~\cite{manuceau} or Lemma 13 of~\cite{Lami16}. We include a proof for the sake of completeness.

\begin{lemma} \label{QCM geom lemma}
Let $V>0$ be a positive matrix. Then $V$ is a pure QCM if and only if $V=\Omega V^{-1}\Omega^{T}$, and it obeys~\eqref{Heisenberg} if and only if
\begin{equation}\label{WalterWhite}
V\, \geq\, V\# (\Omega V^{-1}\Omega^T)\, .
\end{equation}
\end{lemma}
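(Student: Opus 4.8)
The plan is to establish Lemma~\ref{QCM geom lemma} in two parts, treating the characterisation of purity first and then the restatement of the Heisenberg uncertainty relation.

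\textbf{Purity characterisation.} First I would recall from the excerpt that a QCM $V>0$ is pure if and only if $V\Omega V\Omega=-\id$. With the block convention \eqref{CCR2}, one has $\Omega^{-1}=\Omega^T=-\Omega$ and $\Omega\Omega^T=\id$. I would simply rearrange $V\Omega V\Omega=-\id$: multiplying on the right by $\Omega^T$ gives $V\Omega V=-\Omega^T=\Omega$, hence $\Omega V\Omega=V^{-1}$ after multiplying by $V^{-1}$ on the left, and finally $V=\Omega^{-1}V^{-1}\Omega^{-1}=\Omega^T V^{-1}\Omega^T$. Since $\Omega^T V^{-1}\Omega^T=\Omega V^{-1}\Omega^T$ (because conjugating by $\Omega$ versus $\Omega^T=-\Omega$ gives the same thing), this is exactly $V=\Omega V^{-1}\Omega^T$. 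The converse is the same computation run backwards. This part is essentially bookkeeping with the symplectic form.

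\textbf{Heisenberg restatement.} For the second equivalence I would use the variational definition of the geometric mean \eqref{geometric}, namely $P\#Q=\max\{X=X^\dag:\ P\geq XQ^{-1}X\}$. Apply it with $P=V$ and $Q=\Omega^T V^{-1}\Omega$. Then $V\geq V\#(\Omega^T V^{-1}\Omega)$ holds if and only if $V$ itself is \emph{not} below that maximum, but more usefully: I would show directly that $V\geq V\#(\Omega^T V^{-1}\Omega)$ is equivalent to $V$ being a valid candidate in the maximisation set shifted appropriately. Concretely, the cleanest route is: $V\#(\Omega^TV^{-1}\Omega)\leq V$ iff there is a self-adjoint $X\leq V$ (w.r.t.\ Löwner order, using monotonicity of $\#$) — no, better to argue via the explicit formula. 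Using \eqref{geom expl} and congruence covariance, set $W\coloneqq V^{1/2}$ and compute $V\#(\Omega^TV^{-1}\Omega)=W\,(W^{-1}\Omega^TV^{-1}\Omega W^{-1})^{1/2}W$. The inequality $V\geq V\#(\Omega^TV^{-1}\Omega)$ becomes, after congruence by $W^{-1}$, the statement $\id\geq (W^{-1}\Omega^T V^{-1}\Omega W^{-1})^{1/2}$, i.e.\ $\id\geq W^{-1}\Omega^TV^{-1}\Omega W^{-1}$, i.e.\ $V=W^2\geq \Omega^TV^{-1}\Omega$, i.e.\ $V^{-1}\geq \Omega V^{-1}\Omega^T$ wait — I need to be careful: it is $V\geq \Omega^T V^{-1}\Omega$. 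Multiplying on both sides by $\Omega$ and $\Omega^T$ and using $\Omega\Omega^T=\id$ turns this into $\Omega V\Omega^T\geq V^{-1}$, equivalently $V^{-1}\leq \Omega V\Omega^T$, which after inverting (order-reversing) is $V\geq (\Omega V\Omega^T)^{-1}=\Omega^{-T}V^{-1}\Omega^{-1}=\Omega V^{-1}\Omega^T$ — consistent. Then the key observation is that $V\geq \Omega^TV^{-1}\Omega$ is exactly equivalent to $V+i\Omega\geq 0$: indeed $V\geq \Omega^T V^{-1}\Omega=\Omega^T V^{-1}\Omega$ can be read as the Schur-complement positivity condition for $\left(\begin{smallmatrix} V & \Omega^T \\ \Omega & V\end{smallmatrix}\right)\geq 0$ via Lemma~\ref{pos cond} after recognising $\Omega^T V^{-1}\Omega=\Omega^T(V)^{-1}(\Omega^T)^T$, wait the cross blocks must be adjoints; since $\Omega^\dag=\Omega^T=-\Omega$, one takes the block matrix $\left(\begin{smallmatrix} V & -i\Omega \\ i\Omega & V\end{smallmatrix}\right)$ whose Schur complement of the top-left block is $V-(i\Omega)(V)^{-1}(-i\Omega)=V-\Omega V^{-1}\Omega=V-\Omega^TV^{-1}\Omega^{T}\cdot(-1)$... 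I would nail the signs so that the $2\times2$ block matrix is positive iff $V+i\Omega\geq0$ (using the unitary $\mathrm{diag}(\id,i\id)$ to remove the $i$'s), giving the result.

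\textbf{Main obstacle.} The purity half is routine. The real care is needed in the second half: getting the signs and adjoints in the $2\times 2$ block-matrix argument exactly right, so that the Schur-complement positivity condition of Lemma~\ref{pos cond} genuinely reads $V\geq \Omega^T V^{-1}\Omega$ and simultaneously the full block matrix is unitarily equivalent to something manifestly equal to (two copies of) $V+i\Omega\geq 0$. One must also remember to handle the boundary case where $V+i\Omega$ is not strictly positive by the usual $\varepsilon\id$ limiting argument from Lemma~\ref{pos cond}, since the geometric mean and Schur complement were set up for strictly positive matrices; by continuity of $\#$ on $V>0$ this causes no trouble, but it should be mentioned. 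I expect the whole proof to be about half a page once the sign conventions are fixed.
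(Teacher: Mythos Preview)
Your strategy is correct and leads to a valid proof, but it takes a genuinely different route from the paper's.

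The paper proves both claims via Williamson's normal form: writing $V=S^{-1}(\Lambda\oplus\Lambda)S^{-T}$, one computes directly that $\Omega^{T}V^{-1}\Omega=S^{-1}(\Lambda^{-1}\oplus\Lambda^{-1})S^{-T}$, so purity ($\Lambda=\id$) is equivalent to $V=\Omega^{T}V^{-1}\Omega$, and congruence covariance of $\#$ gives $V\#(\Omega^{T}V^{-1}\Omega)=S^{-1}S^{-T}$, whence \eqref{WalterWhite} is literally $\Lambda\geq\id$. Your approach instead avoids Williamson entirely: for purity you manipulate $V\Omega V\Omega=-\id$ algebraically, and for the Heisenberg part you (i) use the explicit formula \eqref{geom expl} to reduce $V\geq V\#(\Omega^{T}V^{-1}\Omega)$ to $V\geq\Omega^{T}V^{-1}\Omega$, then (ii) identify the latter with $V+i\Omega\geq 0$ through a block-matrix argument. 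Step (i) is clean and correct. For step (ii), rather than wrestling with the signs in the $2\times 2$ block matrix as in your draft, the quickest way is the standard realification: $V+i\Omega\geq 0$ as a Hermitian matrix iff the real symmetric matrix $\bigl(\begin{smallmatrix} V & -\Omega \\ \Omega & V\end{smallmatrix}\bigr)\geq 0$, whose Schur complement (Lemma~\ref{pos cond}) is $V+\Omega V^{-1}\Omega=V-\Omega^{T}V^{-1}\Omega$; this fixes all the signs at once.

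What each buys: the paper's Williamson route is slightly slicker and yields as a free byproduct the explicit identity $V\#(\Omega^{T}V^{-1}\Omega)=SS^{T}$, which is the pure QCM sitting below $V$ (used in the remark following the lemma). Your route is more elementary---it needs only the formula \eqref{geom expl} and Lemma~\ref{pos cond}, no symplectic normal form---and fits nicely with the Schur-complement philosophy of the paper, but it does not hand you that explicit expression without extra work.
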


\begin{proof}
Let the Williamson form of $V$ be given by~\eqref{Williamson}, where (in the convention of this Section) $N=\Lambda \oplus \Lambda$, with $\Lambda\coloneqq(\nu_1,\ldots\nu_n)$. Then we can write
\begin{equation*}
\Omega V^{-1} \Omega^{T}\, =\, \Omega S^{-  T} (\Lambda^{-1} \oplus \Lambda^{-1}) S^{-1}\Omega^{T}\, =\, S \Omega (\Lambda^{-1} \oplus \Lambda^{-1})\Omega^{T} S^{T}\, =\, S (\Lambda^{-1} \oplus \Lambda^{-1}) \Omega \Omega^{T} S^{T}\, =\, S (\Lambda^{-1} \oplus \Lambda^{-1}) S^{T}\, ,
\end{equation*}
where we used in order: (i) the identities $\Omega S^{-  T} = S\Omega$, $S^{-1}\Omega^{T} = \Omega^{T} S^{T}$, all consequences of the defining symplectic identity $S\Omega S^{T}=\Omega$; (ii) the fact that $\Omega$ commutes with $\Lambda^{-1} \oplus \Lambda^{-1}$; and (iii) the orthogonality relation $\Omega\Omega^T=\mathds{1}$. Now the first claim becomes obvious, since $\Lambda=\Lambda^{-1}=\mathds{1}$ if and only if $V$ is a pure QCM. In general, as it can be seen from the above expression, $V$ and $\Omega V^{-1} \Omega^{T}$ are brought in Williamson form by simultaneous congruences with the same symplectic matrix (i.e. $S^{-1}$, in the convention of~\eqref{Williamson}). Hence, the covariance of the geometric mean under congruence ensures that
\begin{equation*}
V\# (\Omega V^{-1}\Omega^{T})\, =\, \left( S \begin{pmatrix} \Lambda & 0 \\ 0 & \Lambda \end{pmatrix} S^{T} \right) \# \left( S \begin{pmatrix} \Lambda^{-1} & 0 \\ 0 & \Lambda^{-1} \end{pmatrix} S^{T} \right)\, =\, S \left( \begin{pmatrix} \Lambda & 0 \\ 0 & \Lambda \end{pmatrix} \# \begin{pmatrix} \Lambda^{-1} & 0 \\ 0 & \Lambda^{-1} \end{pmatrix}\right) S^{T}\, =\, SS^{T}\, ,
\end{equation*}
where the last passage is an easy consequence of the fact that $A\# A^{-1}=\mathds{1}$ for all $A>0$. By comparison with~\eqref{Williamson}, we see that the Heisenberg uncertainty relation $\Lambda\geq \mathds{1}$ can be rephrased as $V=S (\Lambda \oplus \Lambda) S^{T}\geq SS^{T}=V\# (\Omega V^{-1}\Omega^{T})$, which reproduces~\eqref{WalterWhite}, proving the second claim.
\end{proof}

\begin{rem}
From the above proof it is also apparent how, for any positive $A>0$, the matrix $A\# (\Omega A^{-1}\Omega^{T})$ is a pure QCM (independently of the nature of $A$).
\end{rem}

Now we are ready to explain our direct argument to show separability of PPT isotropic Gaussian states, alternative to the use of the mode-wise decomposition.

\begin{proof}[Proof of Theorem~\ref{iso thm}]
We start by rewriting the PPT condition~\eqref{PPT} for a QCM $V_{AB}$ as
\begin{equation*}
\Theta_B V \Theta_B, \geq\, i\Omega\, ,
\end{equation*}
where in the convention of this Section $\Theta_B=\id_A \oplus \left( \begin{smallmatrix} \mathds{1} & 0 \\ 0 & -\mathds{1} \end{smallmatrix}\right)_B$. Thanks to Lemma~\ref{QCM geom lemma}, this becomes in turn
\begin{equation*}
\Theta_B V \Theta_B\, \geq\, (\Theta_B V \Theta_B) \# (\Omega \Theta_B V^{-1} \Theta_B \Omega^{T})
\end{equation*}
and finally
\begin{equation*}
V\, \geq\, V \# (\Theta_B \Omega \Theta_B V^{-1} \Theta_B \Omega^{T}\Theta_B)\, =\, (g V) \# (Z \Omega\, (g V)^{-1}\, \Omega^{T} Z)
\end{equation*}
after conjugating by $\Theta_B$, applying once more the covariance of the geometric mean under congruences, introducing a real parameter $g>0$ (to be fixed later), and defining $Z\coloneqq \mathds{1}_{A}\oplus (-\mathds{1}_{B})=\Theta_B \Omega \Theta_B \Omega^T$. Now, we apply Lemma~\ref{lemma ha=g} to the above expression, obtaining
\begin{equation*}
V\, \geq\, \left(\frac{g V+Z\Omega\, (gV)^{-1}\, \Omega^{T} Z}{2}\right) \# \left( (gV)\, ! \left( Z\Omega\, (gV)^{-1}\, \Omega^{T} Z \right) \right)
\end{equation*}
Although it is not yet transparent, we are done, as the right-hand side of the above inequality is exactly of the form $\gamma_{A}\oplus\gamma_{B}$ when $V$ is the QCM of an isotropic Gaussian state. In fact, let $g>0$ be such that $gV$ is a pure QCM, satisfying $gV=\Omega\, (gV)^{-1}\, \Omega^{T}=\left( \begin{smallmatrix} P & Y \\ Y^{T} & Q \end{smallmatrix} \right)$, where we have now reverted to a block decomposition with respect to the $A$ vs $B$ splitting. Then on the one hand since $Z=\mathds{1}_{A}\oplus (-\mathds{1}_{B})$ we find
\begin{equation*}
\frac{g V+Z\Omega\, (gV)^{-1}\, \Omega^{T} Z}{2}\, =\, \begin{pmatrix} P & 0 \\ 0 & Q \end{pmatrix} ,
\end{equation*}
while on the other hand
\begin{align*}
(gV)\, ! \left( Z\Omega\, (gV)^{-1}\, \Omega^{T} Z \right)\, &=\, 2 \left( (gV)^{-1} + Z\Omega\, (gV)\, \Omega^{T} Z \right)^{-1}\, =\, 2 \Omega \left( \Omega (gV)^{-1} \Omega^{T} + Z (gV) Z \right)^{-1} \Omega^{T}\, =\\
&=\, 2 \Omega \left( gV + Z (gV) Z \right)^{-1} \Omega^{T}\, =\,  \begin{pmatrix} \Omega P^{-1}\Omega^{T} & 0 \\ 0 & \Omega Q^{-1} \Omega^{T}\end{pmatrix} \, ,
\end{align*}
where we used the definition~\eqref{harmonic} of harmonic mean and the fact that $[Z,\Omega]=0$. Putting all together, we find
\begin{equation*}
V\, \geq\, \begin{pmatrix} P & 0 \\ 0 & Q \end{pmatrix}\#\begin{pmatrix} \Omega P^{-1}\Omega^{T} & 0 \\ 0 & \Omega Q^{-1} \Omega^{T}\end{pmatrix}\, =\, \begin{pmatrix} P \# \Omega P^{-1} \Omega^{T} & 0 \\ 0 & Q \# \Omega Q^{-1} \Omega^{T} \end{pmatrix}\, =\, \gamma_{A}\oplus \gamma_{B}\, .
\end{equation*}
Since we already observed that $P \# \Omega P^{-1} \Omega^{T}$ is a QCM for any $P>0$ (and analogously for $Q$), a direct invocation of Lemma~\ref{sep} allows us to conclude the proof.
\end{proof}
}

\section{Entangling Gaussian states via passive optical operations}\label{pass}

Throughout this Section, we finally complete the solution of a problem posed in~\cite{passive} and there addressed under some additional constraints. Let us start by recalling that symplectic operations can be divided into two main categories, namely those such as squeezers that require an exchange of energy between the system and the apparatus, called \emph{active}, and those that can be implemented using only beam splitters and phase plates, called \emph{passive}. A symplectic matrix $K$ represents a passive transformation if and only if it is also \emph{orthogonal}, meaning that $KK^{T}=\id$ (it may be worth adding that symplectic orthogonal transformations form the maximal compact subgroup of the symplectic group).
As it turns out, symplectic orthogonal matrices can be represented in an especially simple form if we resort to a position-momentum block decomposition. Namely, one has the parametrisation~\cite{bucco}
\begin{equation}
K = W^\dag \begin{pmatrix} U & \\ & U^* \end{pmatrix} W\, ,
\end{equation}
where
\begin{equation*}
W \coloneqq \frac{1}{\sqrt{2}} \begin{pmatrix} \id & i\id \\ \id & -i\id \end{pmatrix}
\end{equation*}
and $U$ is a generic, $n\times n$ unitary matrix, with $U^*$ denoting its complex conjugate.

Since the implementation of passive operations is so inexpensive in quantum optics and entangled states so useful for quantum technologies, the question first posed in~\cite{passive} was a natural one: ``What bipartite Gaussian states are such that they can be entangled via a global, passive operation?'' However, in this full generality the problem was left unanswered in~\cite{passive}. Instead, another related question was investigated and answered there, namely whether \emph{distillable} Gaussian entanglement can be produced in the same fashion. For Gaussian states, as mentioned in the Introduction, distillability is well known to be equivalent to non-positivity of the partial transpose~\cite{Giedke01,GiedkeQIC}, so the authors of~\cite{passive} proceeded to identify the class of Gaussian states that can be made to violate the PPT condition with a passive transformation. However, it is important to realise that since PPT and separability are not the same for general multimode Gaussian states, the two questions are a priori different. Here we show that the answer to the original question above turns out to be yet another situation where the PPT condition is necessary and sufficient to ensure separability of Gaussian states. In other words, we will prove that a bipartite Gaussian state that can not be made distillable (i.e.~non-PPT) via passive operations is necessarily separable, and thus it stays separable under the application of said passive operations. Let us start with a technical lemma that we deduce from recent results obtained in~\cite{bhatia15}.

\begin{lemma} \label{lemma eig vs sp eig}
Let $A>0$ be a strictly positive $2n\times 2n$ matrix. Let $\nu_{i} (A)$ and $\lambda_i (A)$ denote its symplectic and ordinary (orthogonal) eigenvalues, respectively, arranged in nondecreasing order. Then
\begin{equation}
\nu_{1}(A)^2\geq \lambda_1(A) \lambda_2(A)\, .
\label{ineq sympl}
\end{equation}
In particular, every positive matrix whose two smallest eigenvalues satisfy $\lambda_1 \lambda_2\geq 1$ is automatically a legitimate QCM.
\end{lemma}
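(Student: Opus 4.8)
The plan is to reduce inequality \eqref{ineq sympl} to a known majorization-type relation between symplectic and orthogonal eigenvalues. The natural strategy is to exploit the variational (Courant--Fischer style) characterisation of symplectic eigenvalues that has recently been made available. Recall that for $A>0$ the symplectic eigenvalues can be obtained via a min-max principle over symplectic subspaces, or equivalently that the product $\nu_1(A)\cdots\nu_k(A)$ admits a variational expression in terms of $2k$-dimensional subspaces on which $A$ is compared with $i\Omega$. In particular, the results of \cite{bhatia15} provide precisely the link we need: the ordered symplectic eigenvalues $\nu_i(A)$ interlace or majorise the ordered ordinary eigenvalues $\lambda_i(A)$ in a suitable multiplicative sense. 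The cleanest such statement is that $\nu_i(A)\,\nu_j(A) \ge \lambda_{i'}(A)\,\lambda_{j'}(A)$ for appropriate index pairings; specialising to the smallest symplectic eigenvalue (which appears with geometric multiplicity two in the $2n\times 2n$ picture, since $\nu_1$ corresponds to a symplectic pair) yields exactly $\nu_1(A)^2 \ge \lambda_1(A)\lambda_2(A)$.

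Concretely, first I would recall from \cite{bhatia15} the variational formula $\prod_{i=1}^{k}\nu_i(A) = \min \big\{ \mathrm{Pf}(M^T A M)^{?} \big\}$ — more precisely the statement that $\nu_1(A)$ equals the minimum over all two-dimensional symplectic subspaces $\mathcal{S}$ (i.e. spanned by a pair $u, \Omega u$ suitably normalised) of a Rayleigh-type quotient. From this, $\nu_1(A)^2$ is bounded below by the minimum of $\langle u, A u\rangle \langle v, A v\rangle$ over orthonormal pairs $u,v$ meeting the relevant symplectic constraint, and dropping the constraint only decreases the minimum, giving $\nu_1(A)^2 \ge \min_{u\perp v} \langle u, Au\rangle\langle v, Av\rangle = \lambda_1(A)\lambda_2(A)$, where the last equality is the standard two-dimensional Ky Fan / Courant--Fischer identity for the product of the two smallest eigenvalues. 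The second assertion is then immediate: if $\lambda_1(A)\lambda_2(A)\ge 1$ then $\nu_1(A)^2\ge 1$, hence $\nu_1(A)\ge 1$, and since the symplectic eigenvalues are nondecreasing this forces $\nu_i(A)\ge 1$ for all $i$, which is exactly the Heisenberg condition \eqref{Heisenberg} in its Williamson form $N\ge\mathds{1}$, so $A$ is a legitimate QCM.

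The main obstacle I anticipate is making the variational step fully rigorous: the subtlety is that symplectic eigenvalues are \emph{not} simply a min-max over arbitrary subspaces the way ordinary eigenvalues are, because the relevant subspaces must be symplectic (non-degenerate with respect to $\Omega$), and the quantity being extremised is a Pfaffian/product rather than a single Rayleigh quotient. One must therefore either cite the precise inequality from \cite{bhatia15} that already packages this — which is the intended route given the lemma's phrasing ``we deduce from recent results obtained in \cite{bhatia15}'' — or reconstruct the argument that relaxing the symplectic constraint on the optimising subspace is legitimate because it enlarges the feasible set and hence lowers the infimum. A secondary point to handle with care is the correct counting of multiplicities: $\nu_1(A)$ is a single symplectic eigenvalue but corresponds to a \emph{pair} of ordinary-eigenvalue slots in the $2n$-dimensional spectrum, which is why the square appears and why it is $\lambda_1\lambda_2$ rather than $\lambda_1^2$ on the right-hand side. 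Once the cited inequality is in hand, everything else is a one-line deduction.
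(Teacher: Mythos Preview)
Your proposal is correct and lands on the same approach as the paper: the paper's proof is a two-line invocation of the log-majorization inequality $\prod_{j=1}^{k}\nu_j(A)^2 \geq \prod_{j=1}^{2k}\lambda_j(A)$ from \cite{bhatia15} (their Equation~17), specialised to $k=1$ --- exactly the ``cite the packaged inequality'' route you single out as the intended one. Your longer variational reconstruction is more than the paper undertakes; one caution there is that in the Bhatia--Jain min formula the optimising pair $(u,v)$ is \emph{symplectically} normalised ($u^{T}\Omega v=1$), not Euclidean-orthonormal, so ``dropping the constraint'' is not literally an enlargement of the feasible set and would need a bit more care than stated --- but you already flag precisely this as the main obstacle, and the direct citation sidesteps it entirely.
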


\begin{proof}
From~\cite[Equation (71)]{bhatia15} we deduce $\prod_{j=1}^{k} \nu_{n-j+1}(A)^2\leq \prod_{j=1}^{2k} \lambda_{2n-j+1}(A)$ for all $k=1,\ldots,n$, 
with equality for $k=n$, when both terms equal the determinant of $A$. We can use this observation to deduce that $\prod_{j=1}^k \nu_j(A)^2\geq \prod_{j=1}^{2k} \lambda_j(A)$ for all $k=1,\ldots,n$. The special case $k=1$ yields the claim.
\end{proof}

Now, we are ready to present our strengthening of~\cite[Proposition 1]{passive}.

\begin{thm} \label{abs sep Gauss}
Let $V$ be a bipartite QCM of an $n$-mode system. Then the following are equivalent:
\begin{enumerate}[(i)]
\item $KVK^T$ is separable for all Gaussian passive transformations $K$;
\item $KVK^T$ is PPT for all Gaussian passive transformations $K$; and
\item the two smallest eigenvalues of $V$ satisfy $\lambda_1(V)\lambda_2(V)\geq 1$.
\end{enumerate}
\end{thm}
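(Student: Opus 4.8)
The plan is to establish the implications (iii) $\Rightarrow$ (i) $\Rightarrow$ (ii) $\Rightarrow$ (iii), of which the first is the substantive one and the last two are either trivial or reductions to known facts. The implication (i) $\Rightarrow$ (ii) is immediate, since separability implies PPT by \eqref{PeresPPT}. For (ii) $\Rightarrow$ (iii), I would argue by contraposition: suppose $\lambda_1(V)\lambda_2(V) < 1$. The idea is that passive (orthogonal symplectic) operations act on the ordinary spectrum of $V$ by orthogonal conjugation, so they can rotate the two eigenvectors belonging to $\lambda_1(V)$ and $\lambda_2(V)$ into a single mode (say the first mode of a convenient position-momentum split), producing a QCM $KVK^T$ whose restriction to that mode is, up to the ambient, governed by those two small eigenvalues. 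A two-mode-vs-rest QCM with a one-mode marginal having symplectic eigenvalue $\sqrt{\lambda_1\lambda_2} < 1$ — or more precisely, a single mode carrying a sub-vacuum uncertainty — violates the Heisenberg relation on the appropriate partially transposed form, hence is not PPT. I would make this precise using the parametrisation $K = W^\dag (U\oplus U^*) W$ and choosing $U$ to be the unitary that sorts the relevant eigenvectors, invoking the standard fact that $\min_i \nu_i(\text{1-mode block})$ can be pushed below $1$ exactly when $\lambda_1\lambda_2<1$.

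The heart of the proof is (iii) $\Rightarrow$ (i). Here I would use Lemma~\ref{lemma eig vs sp eig} as the engine. Fix an arbitrary passive $K$. Since $K$ is orthogonal, the ordinary spectrum of $KVK^T$ equals that of $V$, so the hypothesis $\lambda_1(V)\lambda_2(V)\geq 1$ gives $\lambda_1(KVK^T)\lambda_2(KVK^T)\geq 1$ as well. By the second statement of Lemma~\ref{lemma eig vs sp eig}, every principal submatrix of $KVK^T$ corresponding to a subset of modes inherits the property that its two smallest ordinary eigenvalues have product $\geq 1$ (interlacing of eigenvalues only pushes them up), hence is itself a legitimate QCM. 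In particular, writing $KVK^T$ in the block form \eqref{V explicit} with respect to the $A$ vs $B$ split, the marginal $V_A$ is a QCM: $V_A \geq i\Omega_A$. I then want to apply Theorem~\ref{simp sep lemma}: it suffices to exhibit an $m$-mode QCM $\gamma_A$ with $i\Omega_A \leq \gamma_A \leq V_A - X(V_B - i\Omega_B)^{-1}X^T$ (taking the $\varepsilon$-regularised version if $V_B - i\Omega_B$ is singular). The natural candidate is $\gamma_A$ built from the Schur-complement matrix $V_A - X(V_B+i\Omega_B)^{-1}X^T$, whose Hermitian part is the real symmetric matrix appearing on the right; I would argue that this real symmetric matrix is itself $\geq i\Omega_A$ because, by Lemma~\ref{pos cond}, it dominates the PPT Schur complement $V_A - X(V_B \mp i\Omega_B)^{-1}X^T$, which is $\geq i\Omega_A$ precisely because $KVK^T$ is PPT — and $KVK^T$ is PPT by (ii), which we get for free once we know (i)$\Leftrightarrow$(ii) on the PPT side, or more directly because $\lambda_1\lambda_2\geq 1$ for all submatrices including the transposed ones makes $(KVK^T)^{T_B}$ a QCM. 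Taking the real symmetric matrix $\gamma_A \coloneqq \frac{1}{2}\big(V_A - X(V_B+i\Omega_B)^{-1}X^T + \text{c.c.}\big)$ and checking $i\Omega_A \leq \gamma_A$ via Lemma~\ref{2x2 interval}-type reasoning (in the general $m$-mode case this step needs care) closes the argument.

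The main obstacle I anticipate is precisely this last point: Lemma~\ref{2x2 interval} only guarantees that a real matrix can be slotted between $M$ and $N, N^*$ in the $2\times 2$ case, i.e.~when $A$ is a single mode, so the clean Schur-complement trick from the proof of Theorem~\ref{PPT thm} does not transfer verbatim to multimode $A$. I expect the resolution is that one does not actually need $A$ to be one mode here: the hypothesis $\lambda_1(V)\lambda_2(V)\geq 1$ is so strong that it forces $KVK^T - i\Omega_A \oplus i\Omega_B$ and its $B$-transpose to both be positive with a uniform slack, and one can then take $\gamma_A$ to be simply $i\Omega_A$ plus a real multiple of the identity, or indeed show that $V_A - \varepsilon\mathds{1}_A$ already works for the Schur complement after a suitable choice, so that \eqref{simp sep 2} is satisfied without ever invoking the $2\times 2$ interval lemma. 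Verifying that the numerics of Lemma~\ref{lemma eig vs sp eig} indeed give enough slack — i.e.~that $\lambda_1\lambda_2\geq 1$ is not just necessary but leaves room for a genuine real interval of admissible $\gamma_A$ — is the computation I would need to be most careful with, but it should follow from the bound $\nu_1(KVK^T)^2 \geq \lambda_1\lambda_2 \geq 1$ applied to $KVK^T$ itself, giving $KVK^T \geq i\Omega$, combined with the same bound for the partial transpose.
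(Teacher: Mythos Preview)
Your argument for (iii) $\Rightarrow$ (i) has a genuine gap, and it is exactly the one you flag yourself: the route through Theorem~\ref{simp sep lemma} and a Schur-complement interval for $\gamma_A$ relies on Lemma~\ref{2x2 interval}, which is strictly a $2\times 2$ phenomenon. Your proposed workarounds do not close it. Taking $\gamma_A = i\Omega_A + t\,\id_A$ for some $t>0$ would require $V_{AB} \geq (i\Omega_A + t\,\id_A)\oplus i\Omega_B$, and the hypothesis $\lambda_1\lambda_2\geq 1$ simply does not deliver such uniform slack on the $A$ block alone (consider $\lambda_1$ small with its eigenvector supported entirely in $A$). Likewise, the averaged candidate $\gamma_A = \tfrac12\big(V_A - X(V_B+i\Omega_B)^{-1}X^T + \text{c.c.}\big)$ need not satisfy $\gamma_A\geq i\Omega_A$ when $m>1$: knowing $N\geq M$ and $N^*\geq M$ does \emph{not} imply $\tfrac12(N+N^*)\geq M$ for larger Hermitian matrices, which is precisely why Lemma~\ref{2x2 interval} is special to size $2$. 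Your final paragraph only reasserts that $KVK^T$ and its partial transpose are QCMs; that is the PPT condition, and since PPT does not imply separability in general for $m,n\geq 2$, no amount of ``slack'' in those two inequalities will by itself produce a real $\gamma_A$.

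The paper's proof of (iii) $\Rightarrow$ (i) bypasses Theorem~\ref{simp sep lemma} entirely and works instead with the original criterion of Lemma~\ref{sep}, constructing \emph{both} $\gamma_A$ and $\gamma_B$ explicitly from spectral data. Write $k=\lambda_1(V)$. If $k\geq 1$ then $V\geq \id_A\oplus\id_B$ and we are done. Otherwise $k<1$, and letting $\ket{x}$ be the unit eigenvector for $\lambda_1$, the hypothesis forces $\lambda_j(V)\geq 1/k$ for all $j\geq 2$, hence
\[
V\ \geq\ k\,\ket{x}\!\!\bra{x}\ +\ \tfrac{1}{k}\big(\id-\ket{x}\!\!\bra{x}\big).
\]
Splitting $\ket{x}=\big(\sqrt{p}\,\ket{y}_A,\ \sqrt{1-p}\,\ket{z}_B\big)^T$ with unit $\ket{y},\ket{z}$, define
\[
\gamma_A \coloneqq k\,\ket{y}\!\!\bra{y} + \tfrac{1}{k}\big(\id_A-\ket{y}\!\!\bra{y}\big),\qquad
\gamma_B \coloneqq k\,\ket{z}\!\!\bra{z} + \tfrac{1}{k}\big(\id_B-\ket{z}\!\!\bra{z}\big).
\]
Each has ordinary eigenvalues $k$ and $1/k$ (the latter with multiplicity), so $\lambda_1\lambda_2=1$ and Lemma~\ref{lemma eig vs sp eig} certifies both as QCMs. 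A direct rank-one calculation then shows $V-\gamma_A\oplus\gamma_B\geq 0$. Since the hypothesis and the construction are invariant under orthogonal conjugation, the same $\gamma_A,\gamma_B$ (built from $KVK^T$) work for every passive $K$. This is the missing idea: rather than squeezing a real $\gamma_A$ into a Schur-complement interval, one exploits the fact that the obstruction to $V\geq \id$ is at most rank one, and that a rank-one defect can always be split compatibly across a direct sum.
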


\begin{proof}
The implication $(i)\Rightarrow (ii)$ is obvious, while $(iii)\Rightarrow (ii)$ already follows from Lemma~\ref{lemma eig vs sp eig} together with the fact that the partial transpose at the level of QCMs is a congruence by orthogonal transformation and thus does not change the ordinary spectrum. One of the main contributions of~\cite{passive} is the proof that $(ii)$ and $(iii)$ are in fact equivalent. In view of this discussion, we have just to show that $(iii)\Rightarrow (i)$. To this end, we will assume that $V$ satisfies $\lambda_1(V)\lambda_2(V)\geq 1$ and construct two local QCMs $\gamma_A, \gamma_B$ that satisfy the hypothesis of the original separability criterion given by Lemma~\ref{sep}. Call $\lambda_1(V)= k$ and observe that if $k\geq 1$ then $V\geq \id=\id_A\oplus \id_B$ and we are done. Otherwise, assume $k<1$ and denote by $\ket{x}$ the normalised eigenvector corresponding to the minimal eigenvalue of $V$, i.e.~$V\ket{x}= k \ket{x}$ and $\braket{x|x}=1$. Since $\lambda_2(V)\geq \frac1k$ and a fortiori $\lambda_i(V)\geq \frac1k$ for all $i\geq 2$, we can write
\begin{equation*}
V\geq k \ket{x}\!\!\bra{x} + \frac1k \left( \id - \ket{x}\!\!\bra{x} \right) .
\end{equation*}
Now, decompose the vector $\ket{x}$ into its $A$ and $B$ components as $\ket{x}=\left(\begin{smallmatrix} \sqrt{p} \ket{y}_A \\ \sqrt{1-p} \ket{z}_B \end{smallmatrix}\right)$, where $0\leq p\leq 1$ and $\braket{y|y}=1=\braket{z|z}$. Then, Lemma~\ref{lemma eig vs sp eig} guarantees that the matrices
\begin{align*}
\gamma_A &\coloneqq k \ket{y}\!\!\bra{y} + \frac1k \left( \id - \ket{y}\!\!\bra{y} \right) \\
\gamma_B &\coloneqq k \ket{z}\!\!\bra{z} + \frac1k \left( \id - \ket{z}\!\!\bra{z} \right)
\end{align*}
are legitimate QCMs. Then showing $V_{AB} - \gamma_A\oplus \gamma_B\geq 0$ would complete our proof. By direct computation, we find
\begin{align*}
V_{AB} - \gamma_A\oplus \gamma_B &\geq -\left( \frac1k - k\right) \begin{pmatrix} p \ket{y}\!\!\bra{y} & \sqrt{p(1-p)} \ket{y}\!\!\bra{z} \\ \sqrt{p(1-p)} \ket{z}\!\!\bra{y} & (1-p) \ket{z}\!\!\bra{z} \end{pmatrix} + \frac1k \begin{pmatrix} \id & 0 \\ 0 & \id \end{pmatrix} \\
&\quad - \begin{pmatrix}  -\left( \frac1k - k\right) \ket{y}\!\!\bra{y} + \frac1k \id & 0 \\ 0 & -\left( \frac1k - k\right) \ket{z}\!\!\bra{z} + \frac1k \id \end{pmatrix} \\
&= \left( \frac1k - k\right) \begin{pmatrix} (1-p) \ket{y}\!\!\bra{y} & - \sqrt{p(1-p)} \ket{y}\!\!\bra{z} \\ -\sqrt{p(1-p)} \ket{z}\!\!\bra{y} & p \ket{z}\!\!\bra{z} \end{pmatrix} \\
&= \left( \frac1k - k\right) \begin{pmatrix} \sqrt{1-p} \ket{y} & -\sqrt{p} \ket{z} \end{pmatrix}^T \begin{pmatrix} \sqrt{1-p} \ket{y} & -\sqrt{p} \ket{z} \end{pmatrix} \\
&\geq 0\, .
\end{align*}
\end{proof}

\begin{rem}
In some sense, one can think of the question posed in~\cite{passive} and answered here in Theorem~\ref{abs sep Gauss} as a continuous variable analogue of the \emph{absolute separability} problem in finite-dimensional quantum information, which asks for the characterisation of those spectra $\sigma=(\lambda_1,\ldots,\lambda_{d d'})$ such that every bipartite quantum state on $\mathds{C}^d\otimes \mathds{C}^{d'}$ with spectrum $\sigma$ is separable~\cite{kus01}. For a recent review of the state of the art, we refer the reader to~\cite{abs sep review}.
A suggestive argument concerning this analogy goes as follows. An arbitrary unitary transformation $\rho\mapsto U\rho U^\dag$ corresponds to an internal time evolution according to some unknown Hamiltonian. Then, the absolutely separable states are exactly those bipartite states whose correlations are so weak that they can not be made entangled by any internal evolution. In the case of continuous variable quantum systems, one may hold the free-field Hamiltonian $\mathcal{H}=\frac12 r^Tr$ as the privileged one, so that it makes sense to restrict oneself to those unitary evolutions that preserve this particular Hamiltonian. If the original state is Gaussian and the unitaries are generated by quadratic Hamiltonians, so that they are represented by symplectic matrices, preserving the free-field Hamiltonian is the defining feature of passive transformations, and one obtains exactly the problem we solved here.

As is often the case, the technical details and the nature of the solution are  simpler in the Gaussian realm. We found that the condition for being `absolutely separable' in the Gaussian sense is expressed by a simple inequality involving only the two smallest ordinary eigenvalues of the QCM, and that there are no `absolutely PPT' states that are not `absolutely separable' too. This latter equivalence has been conjectured to hold for the original problem in discrete-variable systems as well, but so far only partial answers are available. Namely, the conditions for absolute PPT-ness can be written explicitly~\cite{hildebrand07}, but whether or not they imply absolute separability is in general unknown. However, the answer to this latter question has been shown to be affirmative for the case of two qubits~\cite{verstraete01} and more recently for qubit-qudit systems~\cite{johnston13}.
\end{rem}
\section{Summary and outlook}\label{outro}

In this work we advanced the mathematical and physical study of separability and entanglement distillability in Gaussian states of continuous variable quantum systems. Based on the properties of Schur complements and other matrix analysis tools, we obtained a simplified necessary and sufficient condition for the separability of all multimode Gaussian states, requiring optimisation over the set of local covariance matrices of one subsystem only. Exploiting this result, we presented a compact proof
of the equivalence between PPT and separability for  $1$ vs $n$-mode Gaussian states, a seminal result in continuous variable quantum information theory~\cite{Simon00,Werner01}, as well as extended the criterion to multimode classes of so-called mono-symmetric and isotropic Gaussian states, through novel derivations.
Furthermore, we completed  the investigation of entanglement generation under passive operations by
extending seminal results~\cite{passive} to consider the generation of any, possibly PPT, Gaussian entangled state: in this context we showed that, if passive operations can not turn an initial Gaussian state into a non-PPT one, then no PPT entanglement can be generated through them either. This can be interpreted as establishing the equivalence between absolute separability and absolute PPT-ness in the Gaussian world.
Side results of our analysis include a novel proof that
Gaussian states invariant under partial transposition are separable,
as well as an independent proof of the equivalence between Gaussian separability and complete extendability with Gaussian extensions~\cite{Bhat16}.

In the context of this paper, and with the methods illustrated in this study,
it would be interesting to research more general combinations of symmetries and conditions on the symplectic
spectra of quantum covariance matrices whereby the sufficiency of the PPT separability criterion might be further extended. For instance, is it possible to obtain a Gaussian analogue of the results in~\cite{chen08}, whereby bound entangled Gaussian states can only exist given some simple condition on their symplectic rank? In our studies, both for mono-symmetric and isotropic states, large degeneracies in their symplectic spectra (for the marginal covariance matrix of one subsystem, and for the global covariance matrix of the bipartite system, respectively) played a key role in proving the sufficiency of PPT for separability. It would be desirable to provide a full systematic characterisation of such requirements, possibly drawing inspiration from and/or shedding new insight on the Gaussian quantum marginal problem~\cite{tyc}.

Finally, let us stress how matrix analysis tools such as those heavily hammered in this paper have already been proven useful for qualitative and quantitative analysis of entanglement and other correlations, including Einstein-Podolsky-Rosen steering, in general states of continuous variable systems~\cite{Giedke01,giedkemode,nogo1,nogo2,nogo3,eisemi,wise,Adesso12,gian,Simon16,Lami16,anders}. Aside from the fact that very powerful analytical results can be proven with relative simplicity using these tools, it is important to remark once more that the characterisations we provided of the separability problem, as well as the variational characterisation of the Schur complement and related problems, can be straightforwardly recast as {\em semidefinite programs}~\cite{eisemi}, thus leading to efficient numerical methods to witness inseparability and entanglement distillability in general multimode Gaussian or non-Gaussian states based on covariance matrices. 
We will explore these and other applications in further studies.

\begin{acknowledgments}
GA warmly acknowledges highly stimulating interactions with organisers, lecturers, and participants at the 2nd IMSc School on Quantum Information (Chennai, India, December 2016), during which this work was completed, and in particular very fruitful discussions with R.~Simon, M.~Banik, R.~Sengupta, and A.~Nayak on topics related to this paper.
We acknowledge financial support from the European Union under the European Research Council (StG GQCOP No.~637352 and AdG IRQUAT No.~267386) and the European Commission (STREP RAQUEL No.~FP7-ICT-2013-C-323970), the Foundational Questions Institute (fqxi.org) Physics of the Observer Programme (Grant No.~FQXi-RFP-1601), the Spanish MINECO (Project no. FIS2013-40627-P and no. FIS2016-86681-P), and the Generalitat de Catalunya (CIRIT Project No. 2014 SGR 966). AS acknowledges financial support from EPSRC through grant EP/K026267/1.
\end{acknowledgments}

\providecommand \doibase [0]{http://dx.doi.org/}%
\providecommand \dois[2]{\href{\doibase#1}{#2}}


\begin{thebibliography}{99}

\bibitem{introeisert}
J. Eisert and M. B. Plenio, Introduction to the basics of entanglement theory in continuous variable systems,
\dois{10.1142/S0219749903000371}{Int. J. Quant. Inf. {\bf 1}, 479 (2003)}.


\bibitem{biblioparis} A. Ferraro, S. Olivares, and M. G. A. Paris, \href{https://arxiv.org/abs/quant-ph/0503237}{{\it Gaussian states
in continuous variable quantum information} (Bibliopolis, Napoli, 2005)}

\bibitem{adesso07}
G. Adesso and F. Illuminati, Entanglement in continuous-variable systems: recent advances and current perspectives, \dois{10.1088/1751-8113/40/28/S01}{J. Phys. A: Math. Theor. {\bf 40} 7821 (2007)}.

\bibitem{weedbrook12} C. Weedbrook, S. Pirandola, R. Garc\'ia-Patr\'on, N. J. Cerf, T. C. Ralph, J. H. Shapiro, and S. Lloyd,
Gaussian quantum information, \dois{10.1103/RevModPhys.84.621}{Rev. Mod. Phys. {\bf 84}, 621 (2012)}.

\bibitem{adesso14} G. Adesso, S. Ragy, and A. R. Lee, Continuous variable quantum information:
Gaussian states and beyond, \dois{10.1142/S1230161214400010}{Open Syst. Inf. Dyn. {\bf 21}, 1440001 (2014)}.

\bibitem{bucco}A. Serafini, \href{https://www.crcpress.com/Quantum-Continuous-Variables-A-Primer-of-Theoretical-Methods/Serafini/p/book/9781482246346}{\em Quantum Continuous Variables: A Primer of Theoretical Methods} (CRC Press, 2017).

\bibitem{Werner01} R. F. Werner and M. M. Wolf, Bound entangled Gaussian states, \dois{10.1103/PhysRevLett.86.3658}{Phys. Rev. Lett. {\bf 86}, 3658 (2001)}.

\bibitem{Giedke01} G. Giedke, B. Kraus, M. Lewenstein, and J. I. Cirac, Entanglement criteria for all bipartite Gaussian states,
\dois{10.1103/PhysRevLett.87.167904}{Phys. Rev. Lett. {\bf 87}, 167904 (2001)}.

\bibitem{HorodeckiBound}
M. Horodecki, P. Horodecki, and R. Horodecki, Mixed-state entanglement and distillation: Is there a ``bound'' entanglement in nature?,
 \dois{10.1103/PhysRevLett.80.5239}{Phys. Rev. Lett. {\bf 80}, 5239 (1998)}.

\bibitem{nogo1} J. Eisert, S. Scheel, and M. B. Plenio, Distilling Gaussian states with Gaussian operations is impossible, \dois{10.1103/PhysRevLett.89.137903}{Phys. Rev. Lett. {\bf 89}, 137903 (2002)}.

\bibitem{nogo2} J. Fiur\'a\u{s}ek, Gaussian transformations and distillation of entangled Gaussian states, \dois{10.1103/PhysRevLett.89.137904}{Phys. Rev. Lett. {\bf 89}, 137904 (2002)}.

\bibitem{nogo3} G. Giedke and J. I. Cirac, Characterization of Gaussian operations and distillation of Gaussian states, \dois{10.1103/PhysRevA.66.032316}{Phys. Rev. A {\bf 66}, 032316 (2002)}.

\bibitem{GiedkeQIC} G. Giedke, L.-M. Duan, J. I. Cirac, and P. Zoller, Distillability Criterion for all Bipartite Gaussian States,
\href{https://arxiv.org/abs/quant-ph/0104072}{Quant. Inf. Comput. {\bf 1}, 79 (2001)}.

\bibitem{Peres} A. Peres, Separability criterion for density matrices, \dois{10.1103/PhysRevLett.77.1413}{Phys. Rev. Lett. {\bf 77}, 1413 (1996)}.

\bibitem{Simon00} R. Simon, Peres-Horodecki separability criterion for continuous variable systems, \dois{10.1103/PhysRevLett.84.2726}{Phys. Rev. Lett. {\bf 84}, 2726 (2000)}.

\bibitem{Serafini05} A. Serafini, G. Adesso, and F. Illuminati,  Unitarily localizable entanglement of Gaussian states, \dois{10.1103/PhysRevA.71.032349}{Phys. Rev. A {\bf 71}, 032349 (2005)}.

\bibitem{holwer}A. S. Holevo and R. F. Werner, Evaluating capacities of bosonic Gaussian channels, \dois{10.1103/PhysRevA.63.032312}{Phys. Rev. A {\bf 63}, 032312 (2001)}.

\bibitem{giedkemode} G. Giedke, J. Eisert, J. I. Cirac, and M. B. Plenio,
Entanglement transformations of pure Gaussian states, \href{https://arxiv.org/abs/quant-ph/0301038}{Quant. Inf. Comput. {\bf 3}, 211 (2003)}.

\bibitem{botero03} A. Botero and B. Reznik, Modewise entanglement of Gaussian states, \dois{10.1103/PhysRevA.67.052311}{Phys. Rev. A {\bf 67}, 052311 (2003)}.

\bibitem{eisemi} P. Hyllus and J. Eisert, Optimal entanglement witnesses for continuous-variable systems, \dois{10.1088/1367-2630/8/4/051}{New J. Phys. 8, 51 (2006)}.

\bibitem{gian} I. Kogias, A. R. Lee, S. Ragy, and G. Adesso, Quantification of Gaussian quantum steering, \dois{10.1103/PhysRevLett.114.060403}{Phys. Rev. Lett. 114, 060403 (2015)}.

\bibitem{Simon16} G. Adesso and R. Simon, Strong subadditivity for log-determinant of covariance
matrices and its applications, \dois{10.1088/1751-8113/49/34/34LT02}{J. Phys. A: Math. Theor. {\bf 49}, 34LT02
(2016)}.

\bibitem{Lami16} L. Lami, C. Hirche, G. Adesso, and A. Winter, Schur complement inequalities for covariance matrices and monogamy of quantum correlations, \dois{10.1103/PhysRevLett.117.220502}{Phys. Rev. Lett. {\bf 117}, 220502 (2016)}.

\bibitem{sep 2xN} B. Kraus, J. I. Cirac, S. Karnas, and M. Lewenstein, Separability in $2\times N$ composite quantum systems, \dois{10.1103/PhysRevA.61.062302}{Phys. Rev. A {\bf 61}, 062302 (2000)}.

\bibitem{passive} M. M. Wolf, J. Eisert, and M. B. Plenio, Entangling power of passive optical elements, \dois{10.1103/PhysRevLett.90.047904}{Phys. Rev. Lett. {\bf 90}, 047904 (2003)}.

\bibitem{ZHANG05} F. Zhang, \href{http://www.springer.com/gp/book/9780387242712}{\emph{The Schur complement and its applications} (Springer, 2005)}.

\bibitem{BHATIA} R. Bhatia, \href{http://press.princeton.edu/titles/8445.html}{{\it Positive Definite Matrices}, Princeton Series in Applied Mathematics (Princeton University Press, 2009)}.

\bibitem{parallel sum} W. N. Anderson Jr. and R. J. Duffin, Series and parallel additions of matrices, \dois{10.1016/0022-247X(69)90200-5}{J. Math. Anal. Appl. {\bf 26}, 576-594 (1969)}.

\bibitem{ando79} T. Ando, Concavity of certain maps on positive definite matrices and applications to Hadamard products, \dois{10.1016/0024-3795(79)90179-4}{Linear Algebra Appl. {\bf 26}, 2013 (1979)}.

\bibitem{geometric mean} W. Pusz and S. L. Woronowicz, Functional calculus for sesquilinear forms and the purification map, \dois{10.1016/0034-4877(75)90061-0}{Rep. Math. Phys. {\bf 8}, 159-170 (1975)}.

\bibitem{simon94} R. Simon, N. Mukunda, and B. Dutta, Quantum-noise matrix for multimode systems: U(n) invariance, squeezing, and normal forms, \dois{10.1103/PhysRevA.49.1567}{Phys. Rev. A {\bf 49}, 1567
(1994)}.

\bibitem{pramana} Arvind, B. Dutta, N. Mukunda, and R. Simon, The real symplectic groups in quantum mechanics and optics,
\dois{10.1007/BF02848172}{Pramana  {\bf 45}, 471 (1995)}.

\bibitem{willy} J. Williamson, On the algebraic problem concerning the normal forms of linear dynamical systems, \dois{10.2307/2371062}{Am. J. Math. {\bf 58}, 141 (1936)}.

\bibitem{willysim} R. Simon, S. Chaturvedi, and V. Srinivasan, Congruences and canonical forms for a positive matrix: application to the Schweinler-Wigner extremum principle,
\dois{10.1063/1.532913}{J. Math. Phys. {\bf 40}, 3632 (1999)}.

\bibitem{Werner89} R. F. Werner, Quantum states with Einstein-Podolsky-Rosen correlations admitting a hidden-variable model, \dois{10.1103/PhysRevA.40.4277}{Phys. Rev. A  {\bf 40}, 4277  (1989)}.

\bibitem{Bhat16} B. V. Rajarama Bhat, K. R. Parthasarathy, and R. Sengupta, On the equivalence of separability and extendability of quantum states, \dois{10.1142/S0129055X1750012X}{Rev. Math. Phys. {\bf 29}, 1750012 (2017)}.

\bibitem{3-mode sep} G. Giedke, B. Kraus, M. Lewenstein, and J. I. Cirac, Separability properties of three-mode Gaussian states, \dois{10.1103/PhysRevA.64.052303}{Phys. Rev. A {\bf 64}, 052303 (2001)}.

\bibitem{H3} M. Horodecki, P. Horodecki, and R. Horodecki, Separability of mixed states: necessary and sufficient conditions, \dois{10.1016/S0375-9601(96)00706-2}{Phys. Lett. A {\bf 223}, 1 (1996)}.

\bibitem{adescaling} G. Adesso, A. Serafini, and F. Illuminati, Quantification and scaling of multipartite entanglement in continuous variable systems,
\dois{10.1103/PhysRevLett.93.220504}{Phys. Rev. Lett. {\bf 93}, 220504 (2004)}.

\bibitem{moleculo} G. Adesso and F. Illuminati, Genuine multipartite entanglement of symmetric Gaussian states: Strong monogamy, unitary localization, scaling behavior, and molecular sharing structure, \dois{10.1103/PhysRevA.78.042310}{Phys. Rev. A {\bf 78}, 042310 (2008)}.

\bibitem{chen08} L. Chen and Y.-X. Chen, Rank-three bipartite entangled states are distillable, \dois{10.1103/PhysRevA.78.022318}{Phys. Rev. A {\bf 78}, 022318 (2008)}.

\bibitem[42]{42}{D. Adams, \href{http://openlibrary.org/books/OL10488032M/The_Hitch_Hiker's_Guide_to_the_Galaxy}{{\it The hitchhiker's guide to the galaxy} (Harmony Books, New York, 1979)}.}

\bibitem{manuceau}J. Manuceau and A. Verbeure, Quasi-free states of the C.C.R.---Algebra and Bogoliubov transformations, \dois{10.1007/BF01654283}{Commun. Math. Phys. {\bf 9}, 293 (1968)}.

\bibitem{bhatia15} R. Bhatia and T. Jain, On symplectic eigenvalues of positive definite matrices, \dois{10.1063/1.4935852}{J. Math. Phys. {\bf 56}, 112201 (2015)}.


\bibitem{kus01} M. Ku{\'s}, and K. {\.Z}yczkowski, Geometry of entangled states, \dois{10.1103/PhysRevA.63.032307}{Phys. Rev. A {\bf 63}, 032307 (2001)}.

\bibitem{abs sep review} S. Arunachalam, N. Johnston, and V. Russo, Is absolute separability determined by the partial transpose?, \href{https://arxiv.org/abs/1405.5853}{Quant. Inf. Comput. {\bf 15}, 0694-0720 (2015)}.

\bibitem{hildebrand07} R. Hildebrand, Positive partial transpose from spectra, \dois{10.1103/PhysRevA.76.052325}{Phys. Rev. A {\bf 76}, 052325 (2007)}.

\bibitem{verstraete01} F. Verstraete, K. Audenaert, and B. {De Moor}, Maximally entangled mixed states of two qubits, \dois{10.1103/PhysRevA.64.012316}{Phys. Rev. A {\bf 64}, 012316 (2001)}.

\bibitem{johnston13} N. Johnston, Separability from spectrum for qubit-qudit states, \dois{10.1103/PhysRevA.88.062330}{Phys. Rev. A {\bf 88}, 062330 (2013)}.


\bibitem{tyc} J. Eisert, T. Tyc, T. Rudolph, and B. C. Sanders, Gaussian quantum marginal problem, \dois{10.1007/s00220-008-0442-4}{Commun. Math. Phys. {\bf 280}, 263 (2008)}.

\bibitem{anders} J. Anders and A. Winter, Entanglement and separability of quantum harmonic oscillator systems at finite temperature, \href{https://arxiv.org/abs/0705.3026}{Quant. Inf. Comput. {\bf 8}, 0245-0262 (2008)}.  

\bibitem{wise} H. M. Wiseman, S. J. Jones, and A. C. Doherty, Steering, entanglement, nonlocality, and the Einstein-Podolsky-Rosen paradox, \dois{10.1103/PhysRevLett.98.140402}{Phys. Rev. Lett. 98, 140402 (2007)}.

\bibitem{Adesso12} G. Adesso, D. Girolami, and A. Serafini, Measuring Gaussian quantum information and correlations using the R\'enyi entropy of order 2, \dois{10.1103/PhysRevLett.109.190502}{Phys. Rev. Lett. {\bf 109}, 190502 (2002)}.





\end{thebibliography}
\end{document}